\documentclass{article}

\usepackage{graphicx}
\usepackage{amsmath,amssymb,amsthm}
\usepackage{natbib}
\usepackage{tikz}
\usetikzlibrary{arrows.meta}
\usepackage{caption}
\usepackage{booktabs}

\input{glyphtounicode}
\usepackage[T1]{fontenc}
\usepackage{sourceserifpro}
\usepackage{sourcecodepro}
\usepackage[hyphens]{url}

\usepackage{geometry}
\usepackage[onehalfspacing]{setspace}
\usepackage{microtype}
\AtBeginDocument{\frenchspacing}

\usepackage[x11names]{xcolor}
\usepackage{fancyhdr}
\fancypagestyle{plain}{\fancyhf{}}

\usepackage{titling}
\pretitle{ \begin{center}\bfseries\Huge}
\posttitle{\par \end{center} \vskip 0.5em}

\newcommand{\monthyear}{%
  \ifcase\month\or
    January\or February\or March\or April\or May\or June\or
    July\or August\or September\or October\or November\or December%
  \fi\space \number\year
}

\date{\monthyear}

\usepackage{etoolbox}
\AtBeginEnvironment{titlepage}{%
  \pagestyle{empty}%
  \setlength{\parindent}{0pt}%
}

\usepackage{titlesec}
\titleformat*{\section}{\centering\large\bfseries}
\titleformat*{\subsection}{\bfseries}
\titleformat{\paragraph}[runin]{\itshape}{}{}{}[.]
\titlelabel{\thetitle.\quad}

\usepackage[
  pdftex,
  hidelinks,
  hypertexnames=false,
  pdfpagemode=UseNone,
  pdfdisplaydoctitle=true
]{hyperref}

\newtheorem{proposition}{Proposition}

\newtheorem{assumption}{Assumption}

\hypersetup{pdftitle={Market Power and Platform Design in Decentralized Electricity Trading}}

\title{Market Power and Platform Design in Decentralized Electricity Trading}
\author{Nicolas Eschenbaum\thanks{Swiss Economics, Zürich, Switzerland. \href{mailto:nicolas.eschenbaum@swiss-economics.ch}{nicolas.eschenbaum@swiss-economics.ch}. This work was supported by the Swiss Federal Office of Energy (SFOE) under grant number SI/502525.} \and Nicolas Greber\thanks{University of Zurich and Swiss Economics, Zürich, Switzerland. \href{mailto:nicolasjoel.greber@uzh.ch}{nicolasjoel.greber@uzh.ch}.}}

\begin{document}

\begin{titlepage}
\maketitle
\thispagestyle{plain}

\begin{abstract}
This paper studies how platform design shapes strategic behavior in decentralized electricity trading. We develop a finite-horizon dynamic game in which photovoltaic- and battery-equipped players (`prosumers') trade on a platform that maps aggregate imports and exports into internal buy and sell prices. We establish existence of a perfect conditional $\varepsilon$-equilibrium and characterize a Cournot-like market-power mechanism in an observable-types benchmark of the game: because the producer price is decreasing in aggregate exports, strategic prosumers withhold supply and underutilize storage relative to the price-taking benchmark. To quantify these effects, we use a multi-agent computational framework that exploits the differentiable structure of the platform’s clearing rule to compare planner, price-taking, and strategic outcomes under alternative pricing mechanisms. In our baseline calibration, strategic play raises grid settlement cost by about 6 percent relative to price-taking. The magnitude of the distortion depends strongly on platform design: some designs can largely eliminate strategic incentives, while increased competition in storage ownership sharply reduces withholding, with most of the distortion disappearing once storage is split across (more than) three owners. We also find that information disclosure can improve competitive coordination but also increase the market power effects. Despite these distortions, the platform remains highly valuable overall, reducing a passive consumer's annual electricity bill by roughly 40 percent relative to exclusive grid settlement, with strategic behaviour clawing back only about 8 percent of that saving. The results show that pricing rules, information disclosure, and ownership structure determine how much of the gains from decentralized electricity trading are realized.
\end{abstract}

\medskip

\noindent\emph{JEL Classification:} C63, C73, D43, D47, L94, Q41

\medskip

\noindent\emph{Keywords:} electricity trading platforms; market power; pricing mechanisms; battery storage; multi-agent learning

\end{titlepage}

\clearpage


\section{Introduction}\label{sec:intro}

Digital platforms increasingly intermediate transactions in markets that were historically organized through centralized pricing or bilateral contracting. Electricity is a particularly interesting setting, because participants trade a homogeneous good in repeated interactions, the outside option is regulated by retail and feed-in tariffs that are fixed by regulators, and users may control decentralized productive assets such as photovoltaic generation and battery storage. In this environment, a platform is not merely a coordination device. By mapping aggregate buying and selling quantities into internal prices, it becomes a market institution that shapes strategic behavior.

This paper studies market power and platform design in decentralized electricity trading. We consider a platform that clears local trade inside a community of prosumers and settles residual imbalances with the external grid. The platform precommits to a pricing rule that maps aggregate imports and exports into internal buy and sell prices, subject to a tariff corridor given by the regulated import and export prices. Prosumers choose when to charge or discharge their battery, whether to self-consume or sell locally, and when to buy from the platform rather than the grid. Because individual decisions affect aggregate market conditions, they also affect the prices that prosumers themselves face. The resulting game therefore combines intertemporal arbitrage incentives with price effects.

To study this setting, we formulate prosumer trading as a finite-horizon dynamic game with batteries, photovoltaics (PV), platform pricing, and private information. Each prosumer privately observes her type---a sequence of electricity consumption and PV generation realizations---and her own battery history, so that the current state of charge is payoff-relevant but not publicly observed. This yields a dynamic game with continuous actions and expanding private histories, a setting in which standard existence results do not directly apply. We show that the game can be embedded in the framework of \citet{myerson2020epsilon}, which implies existence of a perfect conditional $\varepsilon$-equilibrium for every $\varepsilon>0$.

We then characterize equilibrium behavior and the market-power mechanism in a transparent benchmark of the game with publicly observable types and battery states. We show that a pure-strategy Markov perfect equilibrium exists under mild curvature conditions, and that net exporters withhold supply relative to the corresponding price-taking benchmark. The intuition is classic Cournot: when the producer price is decreasing in aggregate exports, an exporting prosumer values marginal sales at marginal revenue rather than price. The same logic has a dynamic analogue. If an additional unit stored today is expected to be sold later, then a strategic prosumer charges less than a price-taker because future sales are discounted by the effect of own exports on future prices. Decentralized storage therefore creates a prosumer-scale version of a familiar Cournot finding. Strategic players underinvest in capacity in equilibrium (i.e., underutilize storage in our setting) when discharging lowers the price they receive. As a result, an intuitive ordering of outcomes emerges: total grid settlement cost is lowest when all assets are controlled by a social planner, intermediate when prosumers are price-takers, and highest when prosumers play strategically. Since internal transfers between players and the platform cancel, grid cost is the appropriate measure of total welfare, and market power effects are thus costly for welfare.

We then apply a computational framework developed in \cite{EschenbaumGreberSzehr2026} to compare alternative platform rules and provide quantitative measurements of the outcome of the game. We model each prosumer by a neural-network policy and train agents in a differentiable multi-agent environment in which the platform's clearing rule is known. Because prices are a differentiable function of aggregate trades, gradients can be propagated through the clearing routine and through the full multi-period trajectory. This allows us to compare, within a common architecture, three counterfactuals: a centralized planner, price-taking prosumers, and strategic prosumers. The unilateral-deviation regret is consistently very low, confirming that the strategic outcome is an approximate Nash equilibrium, making us confident that agents converged to a perfect conditional $\varepsilon$-equilibrium. We use this framework to study pricing mechanisms commonly used in local electricity markets and to quantify how competition, heterogeneity, and information design affect equilibrium outcomes.

Our quantitative results confirm the theoretical ordering. The social planner achieves the lowest grid cost, the price-taking benchmark lies close behind, and strategic play raises cost by roughly 6\%. Market power manifests by prosumers withholding evening discharge of their batteries. As competition among storage ownership increases, it erodes market power. Moving from a single storage owner to three eliminates most of the strategic distortion, and the welfare loss is predominantly deadweight rather than rent transfer. Across pricing mechanisms however, the effect of market power incentives vary by an order of magnitude. A mid-market rate mechanism virtually removes strategic incentives, while also removing incentives for pure arbitrage. The linear mechanism permits individual withholding but limits aggregate efficiency loss. Allowing agents to observe past clearing prices improves competitive outcomes but amplifies strategic distortions, more than doubling the welfare premium from market power.

Overall, however, the platform remains highly valuable. A passive consumer saves roughly 40\% on annual electricity costs by trading on the platform. Strategic behaviour by battery owners erodes about 8\% of that saving. The policy implication is therefore not that platforms should be avoided, but that their design, in particular the pricing rule, information disclosure, and the competitive structure of storage ownership, materially affects who captures the surplus from decentralized trade.

The paper contributes to three strands of the literature. The first studies storage and market power in electricity markets. Existing work shows that storage can smooth prices and improve welfare under competition, but that concentrated ownership can also lead to strategic underutilization of storage capacity \citep{andrescerezo2023storing,schill2011strategic,sioshansi2014storage,garcia2001strategic,bushnell2003looking,williams2022storage}. Related work emphasizes that market design and ownership structure are central for determining whether decentralized trade improves welfare \citep{baake2023local}. Our contribution is to bring these ideas to prosumer-scale storage on trading platforms and to make the platform pricing rule itself part of the strategic environment.

The second strand of literature studies peer-to-peer and community energy markets, often from the perspective of market architecture, clearing rules, and network constraints \citep{parag2016prosumer,mengelkamp2018brooklyn,morstyn2018federated,le2020peer,zhang2020cournot,etesami2018stochastic,irena2020p2p,tushar2021p2p_review}. Related contributions study battery-enabled peer-to-peer sharing, local pricing, and prosumer incentives such as flexibility provision and cost recovery \citep{he2021energypawn,hoseinpour2024costrecovery}. Relative to that literature, we focus on strategic price impact in a dynamic game and on how mechanism design and information design shape the exercise of market power.

The third strand of literature uses machine learning to analyze strategic interaction and evaluate economic designs. \citet{calvano2020artificial} show that reinforcement-learning algorithms in repeated pricing games can learn supracompetitive outcomes, \citet{zheng2022aieconomist} develop a multi-agent learning framework for tax-policy design, and \citet{curry2022finding} use deep multi-agent reinforcement learning to compute approximate equilibria in microfounded general-equilibrium environments. Closer to our application, \citet{kastius2022dynamic} study reinforcement learning in competitive dynamic pricing. Our computational contribution differs from model-free approaches because we exploit the known, differentiable structure of the platform's clearing rule and backpropagate through the full trading environment, enabling a controlled comparison between strategic and price-taking behavior. More broadly, the paper connects to the industrial organization of electricity markets \citep{fabra2021energy,fabra2023market,wolfram1999measuring,hortacsu2008understanding}, bringing themes of strategic bidding and market power measurement to the emerging setting of prosumer trading platforms.

The remainder of the paper is structured as follows. Section~\ref{sec:fullmodel} introduces the model. Section~\ref{sec:existence-private} establishes equilibrium existence in the private-information game. Section~\ref{sec:benchmark} analyzes market power in the observable-types benchmark. Section~\ref{sec:computation} presents the computational framework, data, and pricing mechanisms. Section~\ref{sec:results} reports the quantitative results. Section~\ref{sec:conclusion} concludes.

\section{A model of prosumer trading}\label{sec:fullmodel}

This section develops the model of prosumer trading that we are studying in this paper. The players (i.e.\ prosumers) may have access to a battery to engage in intertemporal arbitrage and self-consumption of their PV-generated electricity. As the battery state is private information, but payoff-relevant for all players, the game features private information with infinite action spaces, implying standard existence theorems do not apply and making equilibrium characterization complex. Therefore, we first establish equilibrium existence in the full private-information game in \autoref{sec:existence-private} by mapping it into the framework of \cite{myerson2020epsilon}, implying that a perfect conditional $\varepsilon$-equilibrium exists for every $\varepsilon>0$. In \autoref{sec:benchmark} we then study a simplified benchmark setting in which players' types and battery states are publicly observed. This allows us to characterize the static and dynamic market power effects that arise in a transparent complete-information setting. In what follows, we will refer to these two settings as the `private-types' and `observable-types' settings respectively.

\subsection{Setup}\label{subsec:setup}

Time is $t\in\{1,\dots,T\}$ and players are $i\in I:=\{1,\dots,N\}$, where $N<\infty$ and $T<\infty$. At date $0$, nature draws a type profile $\theta=(\theta_i)_{i\in I}\in\Theta:=\prod_{i\in I}\Theta_i$, where each $\Theta_i$ is finite, according to a common prior $\pi\in\Delta(\Theta)$. Player $i$ privately observes $\theta_i$ at the start of period $1$. A type $\theta_i$ specifies a $T$-period sequence of inelastic demand and supply,
\[
\theta_i = \bigl((d_i^t(\theta_i),s_i^t(\theta_i))\bigr)_{t=1}^T,
\qquad
\ell_i^t(\theta_i):=s_i^t(\theta_i)-d_i^t(\theta_i).
\]
We interpret $\ell_i^t(\theta_i)$ as the exogenous physical surplus (deficit) in period $t$.\footnote{Assuming perfectly inelastic demand for electricity is standard in the literature. However, because players in our setting can adjust their battery flow (which affects their total demand), this assumption could be relaxed as long as players per-period demand remains bounded.}

Player $i$ has a battery with state of charge $S_{it}\in[0,\bar S_i]$, with $S_{i1}=0$ (common knowledge). In period $t$, player $i$ chooses a battery flow $b_{it}$, where a positive value denotes charging and a negative value discharging, which updates the battery via
\[
S_{i,t+1}=S_{it}+b_{it}.
\]
Feasibility requires power bounds $b_{it}\in[\underline b_i,\bar b_i]$ and capacity bounds
$0\le S_{it}+b_{it}\le \bar S_i$. Hence the feasible set given $S_{it}$ is
\[
B_i(S_{it})
:=
\bigl[\max\{\underline b_i,-S_{it}\},\,\min\{\bar b_i,\bar S_i-S_{it}\}\bigr].
\]
We assume parameters are such that $B_i(S)$ is nonempty for all $S\in[0,\bar S_i]$.
Given $(S_{it},\theta_i)$, the induced net import (net trade) is
\[
x_{it} := d_i^t(\theta_i)+b_{it}-s_i^t(\theta_i)=b_{it}-\ell_i^t(\theta_i).
\]
Define $x^+:=\max\{x,0\}$ and $x^-:=\max\{-x,0\}$ so that $x=x^+-x^-$. Imports and exports are then given by
\[
m_{it}:=x_{it}^+,\qquad q_{it}:=x_{it}^-.
\]
Let $M_t:=\sum_{i\in I} m_{it}$, $Q_t:=\sum_{i\in I} q_{it}$, and $Y_t:=\sum_{i\in I} x_{it}=M_t-Q_t$.

\subsection{Prices, platform settlement, and payoffs}\label{subsec:prices-payoffs}

External grid prices are given by $(P^E,P^I)$ for exports/imports respectively and are constant throughout the game. The platform precommits to internal price schedules $p_c(M,Q)$ (consumer price) and $p_p(M,Q)$ (producer price), which are time-invariant and common knowledge.
Given bounded battery flows and finite $\Theta$, for each $t$ there exists a compact set $K_t\subset\mathbb R_+^2$ such that $(M_t,Q_t)\in K_t$ for all feasible action profiles and all $\theta\in\Theta$. Let $K:=\bigcup_{t=1}^T K_t$. We impose the following assumption on the price schedules on the attainable set $K$.

\begin{assumption}[Prices on the attainable set]\label{ass:prices1}
The functions $p_c$ and $p_p$ are continuous on $K$ and satisfy for all $(M,Q)\in K$,
\[
0 \le P^E \le p_p(M,Q) \le p_c(M,Q) \le P^I < \infty.
\]
\end{assumption}

\noindent
Assumption~1 places weakly positive bounds on the platform prices, which are typically given by the relevant feed-in tariff (i.e., the export price) and energy tariff (i.e., the import price). We can then define player $i$'s period-$t$ payment to be
\[
c_{it} := p_c(M_t,Q_t)\,m_{it} - p_p(M_t,Q_t)\,q_{it},
\]
so player $i$'s per-period payoff is $u_{it}:=-c_{it}$, and total payoff is $u_i:=\sum_{t=1}^T u_{it}$. The platform settles any net imbalance $Y_t$ with the external grid at cost
\[
C_t(Y_t):=P^I\,Y_t^+ - P^E\,Y_t^-.
\]
Platform profit in period $t$ is given by
\[
\Pi_t := p_c(M_t,Q_t)M_t - p_p(M_t,Q_t)Q_t - C_t(Y_t).
\]
Thus, internal transfers cancel and total (players and platform) period-$t$ surplus equals $-C_t(Y_t)$.

\subsection{Information, timing, strategies, and beliefs}\label{subsec:info-timing}

In each period $t\in\{1,\dots,T\}$, 
(i) players simultaneously choose $b_{it}\in B_i(S_{it})$;
(ii) the platform computes $(M_t,Q_t)$ from the action profile and $\theta$, and (iii) payments $c_{it}$ are realized and the battery state updates via $S_{i,t+1}=S_{it}+b_{it}$.
Players do not observe payments or opponents' actions; player $i$'s only private information is her type $\theta_i$ (observed at $t=1$) and her own past battery flows.\footnote{Payments $c_{it}$ are deterministic functions of the full action profile and types. We exclude them from observed signals in the primitive description for expositional convenience. In \autoref{sec:existence-private} we show that allowing each player to observe her own realized payment (equivalently per-period payoff) is without loss via a purely formal reporting stage, while maintaining compatibility with the projected-signal structure required by \cite{myerson2020epsilon}.}
There is no payoff after period $T$, i.e.\ there is no salvage value.

Player $i$'s private history at the start of period $t$ is
\[
h_{it}:=\bigl(\theta_i,b_{i1},\dots,b_{i,t-1}\bigr),
\qquad h_{i1}:=\theta_i.
\]
The battery state $S_{it}=\sum_{\tau=1}^{t-1}b_{i\tau}$ (with $S_{i1}=0$) is computed from $h_{it}$, and feasibility requires $b_{it}\in B_i(S_{it}(h_{it}))$.
A (behavioral) strategy for player $i$ in period $t$ is a stochastic kernel
\[
\sigma_{it}(\cdot\mid h_{it})\in \Delta(B_i(S_{it})).
\]
A strategy profile is $\sigma=(\sigma_i)_{i\in I}$ with $\sigma_i=(\sigma_{it})_{t=1}^T$. A belief system is a collection $\mu=(\mu_{it})_{i\in I,\,t\in\{1,\dots,T\}}$ where each
\[
\mu_{it}(\cdot\mid h_{it})\in
\Delta \Bigl(\Theta_{-i}\times \prod_{j\neq i}[0,\bar S_j]\Bigr)
\]
represents player $i$'s posterior over $(\theta_{-i},S_{-i,t})$ at information set $h_{it}$. Beliefs are consistent with $(\pi,\sigma)$ if, for each player $i$ and period $t$, $\mu_{it}(\cdot\mid h_{it})$ coincides with a regular conditional distribution of $(\theta_{-i},S_{-i,t})$ given $h_{it}$ under the probability measure induced by $(\pi,\sigma)$.

Given $(\sigma,\mu)$, player $i$'s continuation value can be defined recursively by
\[
V_{it}(h_{it};\sigma,\mu)
:=
\mathbb E_{\pi,\sigma,\mu} \left[\sum_{\tau=t}^T u_{i\tau}\,\Big|\, h_{it}\right],
\qquad V_{i,T+1}\equiv 0.
\]
The equilibrium concept for the private-information game is introduced in \autoref{sec:existence-private}. For the observable-types benchmark we use Markov perfect equilibrium as formally defined in \autoref{sec:benchmark}.

\section{Equilibrium analysis}

\subsection{Equilibrium existence with private types}\label{sec:existence-private}

We establish equilibrium existence in the full private-information game of \autoref{sec:fullmodel} by mapping it into the framework of \cite{myerson2020epsilon}. Their Theorem~9.3 guarantees existence of a perfect conditional $\varepsilon$-equilibrium for every $\varepsilon>0$ in any regular projective game. We verify that our game satisfies their conditions R.1--R.5.

\begin{proposition}[Private Types]\label{prop:private-existence}
For every $\varepsilon>0$, the private-information game possesses a perfect conditional $\varepsilon$-equilibrium [Myerson \& Reny, 2020, Theorem~9.3]. Moreover, this existence result is unchanged (without loss) if each player also observes her realized per-period payment $c_{it}$ (equivalently $u_{it}=-c_{it}$) at the end of each period.
\end{proposition}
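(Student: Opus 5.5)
The plan is to write the game explicitly as a multi-stage game in the format of Myerson and Reny (2020) and then check their regularity conditions R.1--R.5 one at a time. Once that is done, their Theorem~9.3 gives a perfect conditional $\varepsilon$-equilibrium for every $\varepsilon>0$.

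\emph{Embedding.} Nature moves at stage $0$ and draws $\theta$ from the finite set $\Theta$ according to $\pi$. At each stage $t=1,\dots,T$, player $i$'s action set is the compact interval $[\underline b_i,\bar b_i]$. The set of feasible actions given her history is $B_i(S_{it}(h_{it}))$, where $S_{it}(h_{it})=\sum_{\tau<t}b_{i\tau}$.

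Player $i$'s signal at the start of stage $t$ is her own previous action $b_{i,t-1}$, and at $t=1$ it is $\theta_i$. Each signal is therefore a coordinate projection of the current state (nature's draw together with all past actions). This is exactly the projected-signal structure Myerson and Reny require.

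\emph{Regularity conditions.} I would verify them as follows.
\begin{itemize}
\item Action spaces are compact metric subsets of $\mathbb R$.
\item The feasibility correspondence $S\mapsto B_i(S)$ is nonempty by assumption and interval-valued. Its endpoints $\max\{\underline b_i,-S\}$ and $\min\{\bar b_i,\bar S_i-S\}$ are continuous in $S$, and $S$ is continuous in the history. So the correspondence is continuous, with compact and convex values.
\item Nature's move has finite support, so it is trivially continuous.
\item Payoffs: $x_{it}=b_{it}-\ell_i^t(\theta_i)$ is continuous in actions. The maps $x\mapsto x^\pm$ are continuous, so $(M_t,Q_t)$ is continuous and takes values in $K$. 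By Assumption~\ref{ass:prices1}, $p_c$ and $p_p$ are continuous on $K$. Hence $u_i=-\sum_t c_{it}$ is continuous on the compact space of plays, and therefore bounded.
\end{itemize}
Because $\Theta$ is finite, continuity in the nature component holds automatically.

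\emph{Observing payments.} To add observation of $c_{it}$, I would introduce a purely formal reporting stage after each period $t$. In that stage a dummy component, deterministically equal to $c_{it}$, is appended to the state, and player $i$'s signal is the projection onto that coordinate. Since $c_{it}$ is a continuous function of the state, the enlarged game still has compact action spaces, continuous feasibility correspondences, continuous payoffs, and projected signals. The same verification therefore goes through and Theorem~9.3 applies again. Equivalently, the reporting stage can be modelled as a singleton action for nature, or for a dummy player, whose outcome is revealed to $i$; this fits the projective format directly.

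\emph{Main obstacle.} I expect the hardest step to be making the payment signal genuinely \emph{projective}, since it is a continuous function of many coordinates rather than a coordinate projection. I would first try augmenting the state with a stage at which a dummy mover's forced action equals $c_{it}$. The fallback is to let player $i$ report $c_{it}$ herself, with the report constrained by a continuous feasibility correspondence to the singleton $\{c_{it}\}$. Continuity of that singleton correspondence follows from continuity of $c_{it}$, so the projective requirement would then be satisfied. A secondary point is checking the history-dependent feasibility condition in Myerson--Reny's exact form (lower hemicontinuity in past actions). The explicit continuous interval endpoints above should deliver it.
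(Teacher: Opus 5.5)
Your overall route (embed the game as a regular projective game, check R.1--R.5, invoke Theorem~9.3) is the paper's route. There are, however, two concrete gaps, and both are feasibility-constraint problems.

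\textbf{Battery feasibility and R.1.} Condition R.1 of Myerson--Reny's regular projective games, as the paper's proof uses it, requires the available action set to be the same fixed set at every signal: $\Phi_{it}(s_{it})=A_{it}$. Your embedding keeps $b_{it}\in B_i(S_{it}(h_{it}))$ as a history-dependent feasibility correspondence. Its continuity (or lower hemicontinuity) does not rescue this, because R.1 admits no signal-dependent restriction at all, so there is no ``exact form'' of a feasibility clause left to check.

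The missing step is the paper's reformulation. Player $i$ chooses an \emph{intended} action $b_{it}$ from the fixed interval $[\underline b_i,\bar b_i]$. The implemented flow is $\tilde b_{it}=\Pi_{B_i(S_{it})}(b_{it})$, with $S_{i,t+1}=S_{it}+\tilde b_{it}$. Your observation that the endpoints of $B_i(S)$ are continuous in $S$ is exactly what makes this projection continuous in $(b_{it},S_{it})$, so payoffs stay continuous (R.4). Out-of-range intentions are payoff-equivalent to boundary actions, so the set of outcome paths is unchanged.

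Relatedly, the signal has to be the full projection $(\theta_i,b_{i1},\dots,b_{i,t-1})$. With only $b_{i,t-1}$, player $i$ cannot even compute $S_{it}$.

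\textbf{Observing payments.} Your device for the payment observation does not fit the projective format, for the same underlying reason.
\begin{itemize}
\item A singleton action for nature carries no information, so it cannot encode $c_{it}$.
\item Making nature's move a Dirac mass at $c_{it}$ requires a history-dependent degenerate distribution on a continuum. This has no continuous density and so fails the regularity condition R.5.
\item Forcing a dummy player into $\{c_{it}\}$ is again a history-dependent feasibility constraint, ruled out by R.1.
\item Your fallback of forcing player $i$ herself into $\{c_{it}\}$ fails for the same reason. It is also ill-posed: $c_{it}$ depends on $\theta_{-i}$ and $b_{-i,t}$, so the constraint is not a function of $i$'s own signal.
\end{itemize}
The paper instead adds a formal reporter who chooses $\hat c_t$ \emph{freely} from a fixed compact box $\prod_{i}[-\bar c,\bar c]$, with continuous payoff $-\sum_i(\hat c_{it}-\tilde c_{it})^2$. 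The report is thus pinned down by incentives rather than by a constraint. It is a genuine action coordinate, can be projected into $i$'s signal, and R.1--R.5 continue to hold.

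In both places, the idea you need is to replace constraints by unconstrained actions from fixed compact sets combined with continuous payoff transformations: projection for the battery, quadratic loss for the report.
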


Proposition~\ref{prop:private-existence} establishes that an approximate equilibrium exists in the full private-information game for any $\varepsilon>0$. This result is non-trivial because the game combines private information with continuous actions and an expanding private history: each player's battery state $S_{it}$ is private information (it is determined by past battery flows that only the owner observes) but it is payoff-relevant for all players through its effect on aggregate quantities and thus on prices. Standard existence theorems for finite Bayesian games do not directly apply in this setting.

The proof proceeds by mapping an equivalent representation of the game into the framework of \cite{myerson2020epsilon}. The key elements are: (i) representing battery feasibility via an intended action in a fixed compact interval together with an implemented flow given by projection onto $B_i(S_{it})$, which makes the available (intended) action set independent of the signal (Condition~R.1), and (ii) exploiting that each player's private signal is a projection of the outcome history onto her own coordinates (Condition~R.2). Continuity of payoffs on the attainable set (Assumption~\ref{ass:prices1}), compactness of action sets, and the finiteness of $\Theta$ establish the remaining regularity conditions. Allowing a player to observe her own per-period payment $c_{it}$ is without loss because it can be incorporated into the projected-signal structure by adding a purely formal reporting stage (as done in the proof), so the same existence conclusion continues to hold.

In a perfect conditional $\varepsilon$-equilibrium, no player can improve her expected continuation payoff by more than $\varepsilon$ at any information set, conditional on reaching that information set. The parameter $\varepsilon$ thus bounds the expected continuation gain from deviation in monetary units.
Having established equilibrium existence in the full private-information game, we now turn to the observable-types benchmark in \autoref{sec:benchmark} to characterize the market power mechanisms that drive strategic behavior.


\subsection{Market power effects with observable types}\label{sec:benchmark}

This section analyzes a benchmark setting in which the type profile $\theta$ is publicly observed before period $1$ and battery states are also publicly observed. All other primitives (battery constraints, price schedules, and payoffs) are unchanged. This allows us to provide a clean characterization of the static and dynamic incentives for players that are driven by market power.

In this setting, the public state at the start of period $t$ is
\[
\omega_t := (t,\theta,S_t),\qquad S_t:=(S_{it})_{i\in I}\in \prod_{i\in I}[0,\bar S_i],
\]
with $\omega_1=(1,\theta,0)$ where $0$ denotes the zero vector in $\prod_{i\in I}[0,\bar S_i]$. In each period $t\in\{1,\dots,T\}$, players observe $\omega_t$, choose $b_{it}\in B_i(S_{it})$ simultaneously, payoffs accrue, and the state updates via $S_{i,t+1}=S_{it}+b_{it}$.

A (mixed) Markov strategy for player $i$ is a collection $\beta_i=(\beta_{it})_{t=1}^T$ where, for each $t$,
\[
\beta_{it}(\cdot\mid \theta,S_t)\in \Delta\bigl(B_i(S_{it})\bigr),
\qquad (\theta,S_t)\in \Theta\times \prod_{j\in I}[0,\bar S_j].
\]
Let $\beta=(\beta_i)_{i\in I}$ be a Markov strategy profile. For each $t$ and public state $(\theta,S_t)$, $\beta$ induces a distribution over the action profile $b_t=(b_{it})_{i\in I}$ and thus over $(M_t,Q_t,Y_t)$ and the payoff vector $(u_{it})_{i\in I}$. The next-period state is $S_{t+1}=S_t+b_t$ componentwise.

Given $\beta$, define player $i$'s continuation value recursively by $W_{i,T+1}\equiv 0$ and, for $t\in\{1,\dots,T\}$,
\[
W_{it}(\theta,S_t;\beta)
:=
\mathbb E_{\beta} \left[u_{it}(b_t;\theta,S_t)+W_{i,t+1}(\theta,S_{t+1};\beta)\,\Big|\,\theta,S_t\right],
\]
where the expectation is taken over $b_t\sim \beta_t(\cdot\mid \theta,S_t):=\prod_{j\in I}\beta_{jt}(\cdot\mid \theta,S_t)$.

A Markov strategy profile $\beta^\ast$ is a \emph{Markov perfect equilibrium (MPE)} if for every $i\in I$, every $t\in\{1,\dots,T\}$, and every public state $(\theta,S_t)$, the distribution $\beta_{it}^\ast(\cdot\mid \theta,S_t)$ assigns probability $1$ to actions $b_{it}\in B_i(S_{it})$ that maximize
\[
\mathbb E_{\beta^\ast} \left[u_{it}\bigl((b_{it},b_{-i,t});\theta,S_t\bigr)
+W_{i,t+1}\bigl(\theta,S_t+(b_{it},b_{-i,t});\beta^\ast\bigr)
\,\Big|\,\theta,S_t,\ b_{it}\right]
\]
given $\beta_{-i,t}^\ast(\cdot\mid \theta,S_t)$.

We place the following shape and curvature assumptions on the price schedules.

\begin{assumption}[Monotonicity and curvature of price schedules]\label{ass:prices2}
The functions $p_c$ and $p_p$ are continuously differentiable and satisfy:
\begin{enumerate}
\item[(i)] The price schedule $p_c$ is weakly increasing in $M$ for each fixed $Q\ge 0$ and for every $Q\ge 0$ and every $M_{-i}\ge 0$, the total import-cost function
\[
m\,p_c(M_{-i}+m,\,Q)
\]
is convex in $m$.

\item[(ii)] The price schedule $p_p$ is weakly decreasing in $Q$ for each fixed $M\ge 0$ and for every $M\ge 0$ and every $Q_{-i}\ge 0$, the total export-revenue function
\[
q\,p_p(M,\,Q_{-i}+q)
\]
is  concave in $q$.
\item[(iii)] For the strict inequality in Proposition~\ref{prop:market-power}\emph{(ii)}, we additionally require that $p_p$ is strictly decreasing in $Q$ at the relevant arguments, i.e.\ $\partial_Q p_p(M,Q)<0$ whenever the solution is interior.
\end{enumerate}
\end{assumption}

\noindent
Assumption~\ref{ass:prices2} places intuitive shape assumptions on the price schedules. The consumption price increases in consumption (import) and the sell price decreases in supply (export). In addition, to guarantee existence of pure-strategy equilibria in the one-shot and continuation games in Proposition~\ref{prop:market-power}, the respective cost and revenue functions are assumed to be convex and concave, respectively. 

The MMR mechanism satisfies Assumption~\ref{ass:prices2} globally (see Appendix~\ref{app:verification}). The SDR and linear mechanisms satisfy the monotonicity conditions and the curvature conditions within each smooth pricing regime, but price clamping (for the linear mechanism) and regime-boundary effects (for SDR) can create kinks that violate global concavity or convexity. For these mechanisms, we verify in Appendix~\ref{app:verification} that the \emph{total} period-$t$ objective---stage payoff plus continuation value---remains concave in $b_{it}$ on the attainable set, which is the condition actually used in the proof of Proposition~\ref{prop:market-power}(i).

To analyze market power effects, we define two benchmarks for player $i$ at state $(\theta,S_t)$. Fix a period $t$ and a public state $(\theta,S_t)$ in the observable-types benchmark.
For any opponents' action profile $b_{-i,t}\in\prod_{j\neq i}B_j(S_{jt})$, let
$(M_{-i,t},Q_{-i,t})$ denote the induced aggregates and define player $i$'s (period-$t$) import/export quantities as
\[
m_{it}(b_{it}) := \bigl(b_{it}-\ell_i^t(\theta_i)\bigr)^+,
\qquad
q_{it}(b_{it}) := \bigl(\ell_i^t(\theta_i)-b_{it}\bigr)^+.
\]
Write $S_{t+1}=S_t+(b_{it},b_{-i,t})$.

Let $W^\ast_{i,t+1}(\theta,\cdot)$ denote player $i$'s equilibrium continuation value at $t+1$ induced by an MPE
$\beta^\ast$, and let $\mathcal L_{-i,t}^\ast(\cdot\mid \theta,S_t)$ denote the distribution of $b_{-i,t}$ induced by
$\beta^\ast_{-i,t}(\cdot\mid \theta,S_t)$.
In MPE, the best reply by player $i$ is given by $b_{it}\in B_i(S_{it})$ that maximizes
\[
\mathbb E_{b_{-i,t}\sim \mathcal L_{-i,t}^\ast}
\Bigl[
-\,p_c(M_t,Q_t)\,m_{it}(b_{it}) \;+\; p_p(M_t,Q_t)\,q_{it}(b_{it})
\;+\; W^\ast_{i,t+1}(\theta,S_{t+1})
\Bigr],
\]
where $(M_t,Q_t)$ depend on $(b_{it},b_{-i,t},\theta)$. Let $b_{it}^\ast$ denote the resulting battery action and write $q_{it}^{\ast}:=q_{it}(b_{it}^{\ast})$ for the corresponding export quantity.

The two benchmarks are defined as follows. First, we define \emph{price-taking} behavior as a benchmark in which the player optimizes individually but treats the internal price level as given when evaluating marginal deviations, i.e.\ the player ignores the effect of her own action on the price schedule. Formally, fix opponents' actions $b_{-i,t}$ and let player $i$'s objective be written as a function of $b_{it}\in B_i(S_{it})$. A price-taking action $b_{it}^{PT}$ is any action satisfying the KKT conditions of player $i$'s problem after removing the own-price-impact terms from the stationarity condition (the terms generated by $\partial_M p_c$ and $\partial_Q p_p$), while keeping the realized price levels $p_c(M_t,Q_t)$ and $p_p(M_t,Q_t)$ unchanged.

On the interior export branch ($q=\ell_i^t(\theta_i)-b_{it}>0$), with opponents fixed and $Q_t=Q_{-i,t}+q$, this means that the strategic stationarity condition
\[
p_p(M_{-i,t},Q_{-i,t}+q)+q\,\partial_Q p_p(M_{-i,t},Q_{-i,t}+q)+\partial_q \Psi_i(q)=0
\]
is replaced by the price-taking condition
\[
p_p(M_{-i,t},Q_{-i,t}+q)+\partial_q \Psi_i(q)=0,
\]
together with the corresponding boundary/complementarity conditions. Here $\Psi_i(q)$ denotes the continuation term expressed in the export variable $q$ (and $\Psi_i\equiv 0$ in the stage-game). We denote the associated export quantity by $q_{it}^{PT}$.

Second, consider that a social planner chooses $(b_{jt})_{j\in I}$ (and continuation actions) to minimize total grid cost $\sum_{\tau=t}^T C_\tau(Y_\tau)$ subject to all feasibility and battery constraints. Let $b_{it}^{SP}(\theta,S_t)$ denote player $i$'s period-$t$ component of a planner-optimal plan (if not unique, any selection) and  $q_{it}^{SP}:=q_{it}(b_{it}^{SP})$ the corresponding export quantity.


For the observable-types setting, we can then state the following result.

\begin{proposition}[Observable Types]\label{prop:market-power}
For the observable-types setting and under Assumptions~\ref{ass:prices1} and~\ref{ass:prices2}:
\begin{enumerate}
\item[\emph{(i)}]
A pure-strategy Nash equilibrium exists in the one-shot game. Moreover, for every $\varepsilon>0$ the dynamic observable-types game possesses a perfect conditional $\varepsilon$-equilibrium (by the same \cite{myerson2020epsilon} argument as in Proposition~\ref{prop:private-existence}). If the equilibrium continuation values $W_{it}^\ast(\theta,S_t)$ are concave in $S_{it}$ for each $i$, $t$, $\theta$, and $S_{-i,t}$, then a pure-strategy MPE exists in the dynamic game.
\item[\emph{(ii)}]
Fix a period $t$ and state $(\theta,S_t)$, and consider the associated one-shot game. For any player $i$ with an interior export solution ($q_{it}^{\ast}>0$),
\[
q_{it}^{\ast}\ \le\ q_{it}^{PT} = q_{it}^{SP}
\]
with strict inequality in the first comparison whenever $p_p(M,Q)$ is strictly decreasing in $Q$ at the relevant arguments.
\item[\emph{(iii)}]
Fix any $t<T$, player $i$, state $(\theta,S_t)$, and opponents' continuation path $(b_{-i,\tau})_{\tau=t}^T$.
If both period-$t$ solutions are interior import choices and the marginal stored unit is exported at some
$\tau>t$, then
\[
b_{it}^{\ast}\ \le\ b_{it}^{PT}.
\]
\item[\emph{(iv)}]
Fix any $t<T$ and state $(\theta,S_t)$. Let $(b_{i\tau}^{PT})_{i,\tau=t}^T$ denote the price-taking continuation from $(\theta,S_t)$, and let $(b_{i\tau}^{SP,t})_{i,\tau=t}^T$ be a planner-optimal continuation. If
\[
\sum_i b_{it}^{PT} \;<\; \sum_i b_{it}^{SP,t},
\]
then, if the additional period-$t$ charge is fully discharged through future exports, there exists some $\tau>t$ such that
\[
Q_\tau^{SP,t} \;>\; Q_\tau^{PT}.
\]

\end{enumerate}
\end{proposition}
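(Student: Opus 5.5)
We treat the four parts in turn, reusing the structure already set up for Proposition~\ref{prop:private-existence} wherever possible.

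\emph{Part (i).} For the one-shot game at a fixed state $(\theta,S_t)$, each strategy set $B_i(S_{it})$ is a nonempty compact interval. Payoffs are continuous in the profile by Assumption~\ref{ass:prices1}. The key step is quasi-concavity of player $i$'s payoff in $b_{it}$. Write $x=b_{it}-\ell_i^t$. On the import branch $x\ge 0$, the payoff is $-m\,p_c(M_{-i}+m,Q_{-i})$, which is concave by Assumption~\ref{ass:prices2}(i). On the export branch $x\le 0$, it is $q\,p_p(M_{-i},Q_{-i}+q)$, which is concave in $q$ by Assumption~\ref{ass:prices2}(ii), hence concave in $b_{it}$ since $q=-x$ is linear.

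At the kink $x=0$ both branches vanish. The left derivative in $b_{it}$ is $p_p(M_{-i},Q_{-i})\ge 0$ minus a nonnegative term, and equals $p_p$ at $q=0$. The right derivative is $-p_c(M_{-i},Q_{-i})$. Since $p_c\ge p_p\ge 0$, the left derivative weakly exceeds the right one. Therefore the piecewise function is concave on the whole interval.

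Debreu--Glicksberg--Fan then gives a pure Nash equilibrium. The $\varepsilon$-equilibrium claim follows verbatim from the proof of Proposition~\ref{prop:private-existence}, since public observability only coarsens nothing and keeps conditions R.1--R.5 intact (the signals remain projections). For the pure MPE, we argue by backward induction. At $T$ the stage game is the one above. At $t<T$, given the hypothesis that $W^\ast_{i,t+1}$ is concave in $S_{i,t+1}=S_{it}+b_{it}$, the total objective is concave plus concave. Continuity of $W^\ast_{i,t+1}$ in $S_{t+1}$ must also be checked. The step I expect to be the real obstacle is selecting equilibria measurably and continuously across states. I would handle it by taking a measurable selection of the (closed-graph) equilibrium correspondence and invoking the stated concavity hypothesis rather than proving continuity of values in general.

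\emph{Part (ii).} We compare first-order conditions on the interior export branch, with $\Psi_i\equiv0$. Define $g(q)=q\,p_p(M_{-i},Q_{-i}+q)$. Concavity makes $g'$ nonincreasing, and $g'(q)=p_p+q\,\partial_Qp_p\le p_p(\cdot)$. The price-taking FOC sets $p_p$ (net of the boundary multipliers) to zero. Since a positive price means selling up to the bound, in the stage game $q^{PT}$ is at the upper bound $\ell_i^t-\underline b_i$ constrained by $B_i$, i.e.\ maximal discharge.

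The strategic solution satisfies $g'(q^\ast)=0$ or sits at the same bound. Since $g'\le p_p$ and $g'$ is nonincreasing, $q^\ast\le q^{PT}$. If $\partial_Qp_p<0$ at an interior $q^\ast$, then $g'(q^\ast)=0<p_p$ forces $q^\ast<q^{PT}$. For the planner in the one-shot game, minimizing $C_t(Y_t)$ with $P^E\ge0$ makes exporting (reducing $Y$) weakly beneficial. The planner therefore also pushes exports to the feasibility bound, giving $q^{SP}=q^{PT}$. I would state this as maximal feasible export, taking a selection if $P^E=0$.

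\emph{Part (iii).} We write the period-$t$ interior import FOC. For the strategic player it reads $p_c+m\,\partial_Mp_c=\partial_{S}W_{i,t+1}$. The marginal stored unit is exported at $\tau$, so by an envelope argument $\partial_SW_{i,t+1}$ equals marginal revenue $p_p+q_\tau\partial_Qp_p$ at $\tau$. For the price taker it reads $p_c=p_p(\tau)$. Hence the strategic player faces a weakly higher marginal cost and a weakly lower marginal value. By convexity of the import cost and concavity of the continuation in $b_{it}$, we compare the two monotone marginal-net-benefit curves pointwise to get $b^\ast_{it}\le b^{PT}_{it}$.

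\emph{Part (iv).} This is an accounting argument. Total exports over $\tau>t$ equal charge released. If the planner charges strictly more in aggregate at $t$, and the extra charge is fully discharged through future exports, then $\sum_{\tau>t}Q^{SP,t}_\tau>\sum_{\tau>t}Q^{PT}_\tau$. A strict inequality between the sums forces a strict inequality in some summand.
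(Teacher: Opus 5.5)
Your proposal follows the paper's proof essentially part by part. In (i) you use branchwise concavity of the stage payoff, a concave kink, and a Kakutani-type fixed point for the one-shot game, then backward induction with a Borel selection from the closed-graph equilibrium correspondence for the pure MPE. In (ii) and (iii) you compare first-order conditions, with the price-taker and the planner both exporting maximally in the stage game, and in (iv) you use the same export-accounting argument.

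Two details need correcting.

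First, at the kink the left derivative in $b_{it}$ is $-p_p(M_{-i},Q_{-i})$, not $+p_p$. You differentiated in $q$, but $q=\ell_i^t-b_{it}$. As you wrote it, the comparison with the right derivative $-p_c$ would hold trivially. With the correct sign, the kink is concave precisely because $p_p\le p_c$ (Assumption~\ref{ass:prices1}), which is how the paper argues.

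Second, the $\varepsilon$-equilibrium claim does not follow verbatim. Under the intended-action/projection representation needed for R.1, the opponents' states $S_{-i,t}$ that a player observes are computed from intended actions through projections onto $B_j(\cdot)$. They are not coordinates of the outcome history, so R.2 does not hold automatically. The paper restores it by adding a purely formal public reporting stage that announces the public state as an explicit coordinate.

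The continuity of $W^\ast_{i,t+1}$ that you flag is left equally implicit in the paper. Its Berge/closed-graph step presupposes continuous continuation values, and concavity in the own state alone does not give the joint continuity that step uses. On that point your plan is no weaker than the published argument.
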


Proposition~\ref{prop:market-power} clarifies the static and dynamic market-power effects that arise in the observable-types setting. Part~\emph{(i)} first establishes existence of a pure-strategy Nash equilibrium in the one-shot stage game. It also notes that the dynamic observable-types game admits a perfect conditional $\varepsilon$-equilibrium for every $\varepsilon>0$, and that under concavity of the equilibrium continuation values a pure-strategy MPE exists in the dynamic game. We make this concavity condition explicit because it is not directly implied by Assumptions~\ref{ass:prices1} and~\ref{ass:prices2}, but can be shown to hold for the mechanisms studied in this paper (see Appendix~\ref{app:verification}). Part~\emph{(ii)} then provides an intuitive ranking among exporting players in the stage game: prosumers will export weakly less than the corresponding price-taking benchmark, while the price-taking and social-planner export choices coincide. Part~\emph{(iii)} provides the dynamic analogue and similarly shows that a strategic prosumer charges weakly less than a price-taker because future sales are valued at marginal revenue rather than price. Finally, part~\emph{(iv)} clarifies the planner comparison from part~\emph{(ii)} for the dynamic game.  If a planner-optimal continuation charges more in period $t$ than the price-taking continuation and this additional charge is later discharged through exports, then the planner must export strictly more than the price-taking path in at least one future period. This implies that whenever prosumers undercharge relative to a price-taking player and thereby limit future exports, a stricter ranking of quantities supplied arises in the dynamic game than in the stage game. Note that the corresponding strategic behaviour does not arise among net importers for whom minimizing imports is always a best-response both in MPE and when being price-takers.

\begin{figure}[ht!]
\centering
\includegraphics[width=0.45\textwidth]{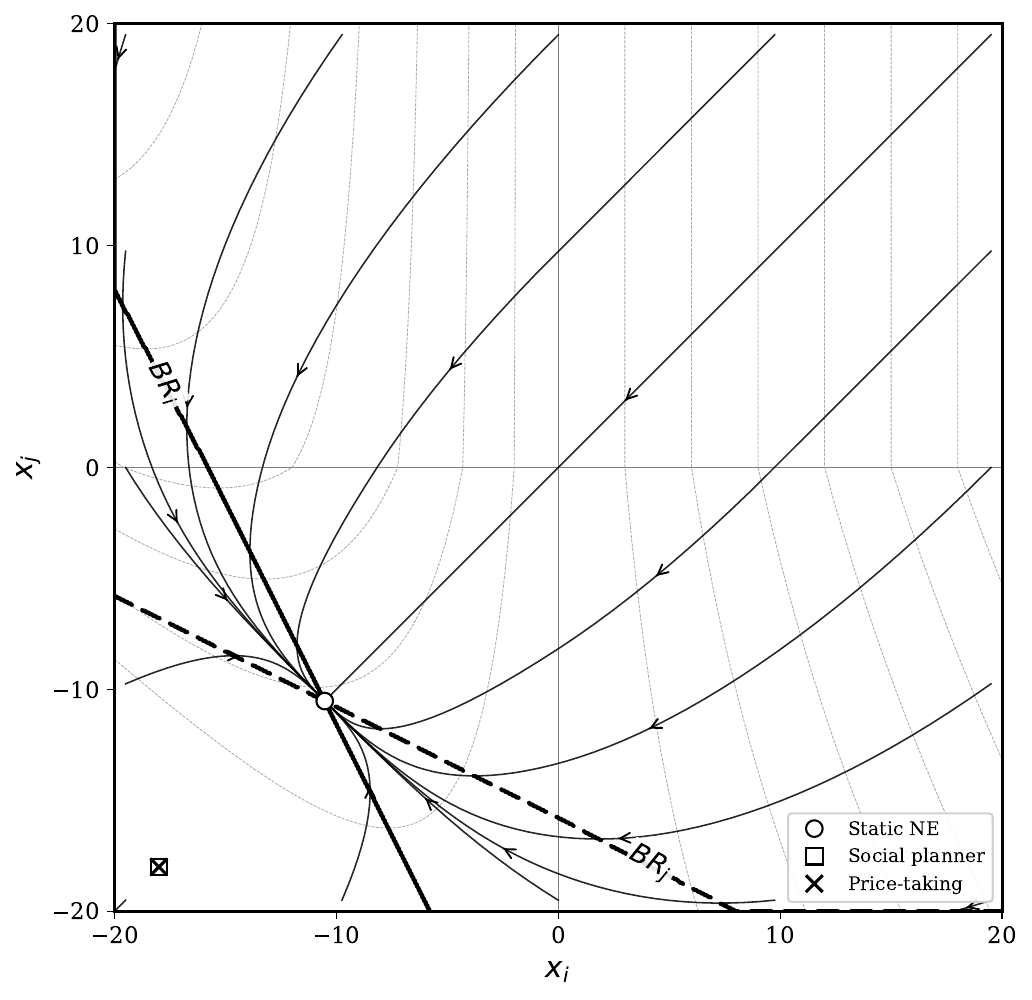}
\caption{Best-response curves and convergence to Nash equilibrium in a two-prosumer stage game with net exporters.}
\label{fig:intuition}
\end{figure}

Figure~\ref{fig:intuition} illustrates the strategic interaction for the simplest case: a single-period game  with two symmetric prosumers and fixed exogenous demand to ensure the price reacts to players' actions.\footnote{If both players are net exporters, there is no demand on the platform. The price mechanisms we consider will then default to the lower bound, i.e. the constant export price $P^E$. For the illustration of the strategic interaction we therefore add a positive exogenous demand.} In the illustrated case, both players have positive surplus ($\ell_i^t>0$) and hence the equilibrium lies in the both-export region ($x_i,x_j<0$), where the game becomes a Cournot duopoly against the exogenous demand. The best-response curves $BR_i$ (solid) and $BR_j$ (dashed) are downward-sloping---quantities are strategic substitutes---and intersect at the symmetric Nash equilibrium $x_i^\ast=x_j^\ast=-D_{\mathrm{ext}}/3$. Each player withholds exports relative to the price-taking optimum because she internalizes the effect of her supply on the market-clearing price; the resulting quantity is strictly less than the price-taking export. The social planner solution coincides with the price-taking optimum, illustrating the ranking developed in Proposition~\ref{prop:market-power} of $q_{it}^{SP} = q_{it}^{PT} \geq q_{it}^{\ast}$.

The gradient-play trajectories (thin curves with arrows) show the adjustment dynamics under simultaneous best-response improvement: from any initial action profile, the system converges to the Nash equilibrium. In the upper right quadrant, both players are net importers and the best-response for a net importer in the stage game is to minimize their imports. Thus the trajectory is a diagonal line. This can also be seen in the dashed indifference curves shown for player $i$ which are vertical in the top right quadrant: the best-response is to move horizontally towards $x_i=0$. In the top left and bottom right quadrants in turn, one player is a net exporter and the other player a net importer. In this case, the net importer continues to want to minimize imports, while the net exporter faces a convex indifference curve and engages in strategic withholding. Because of this strategic incentive, the curvature of player~$i$'s indifference curves reflects the concavity of the export-revenue function $q\,p_p(M,Q_{-i}+q)$ from Assumption~\ref{ass:prices2}(ii). In the dynamic game, these stage game incentives interact with the intertemporal incentives. First, a player who anticipates future market power effects may strategically undercharge her battery today in order to limit her exports tomorrow. Second, a player may charge in order to sell at a higher price in the future and engage in arbitrage. This incentive can be greater than the market power effects, depending on the realizations of types---including the expectations of future realizations---and the price schedule design.


\begin{figure}[ht!]
\centering
\includegraphics[width=1\textwidth]{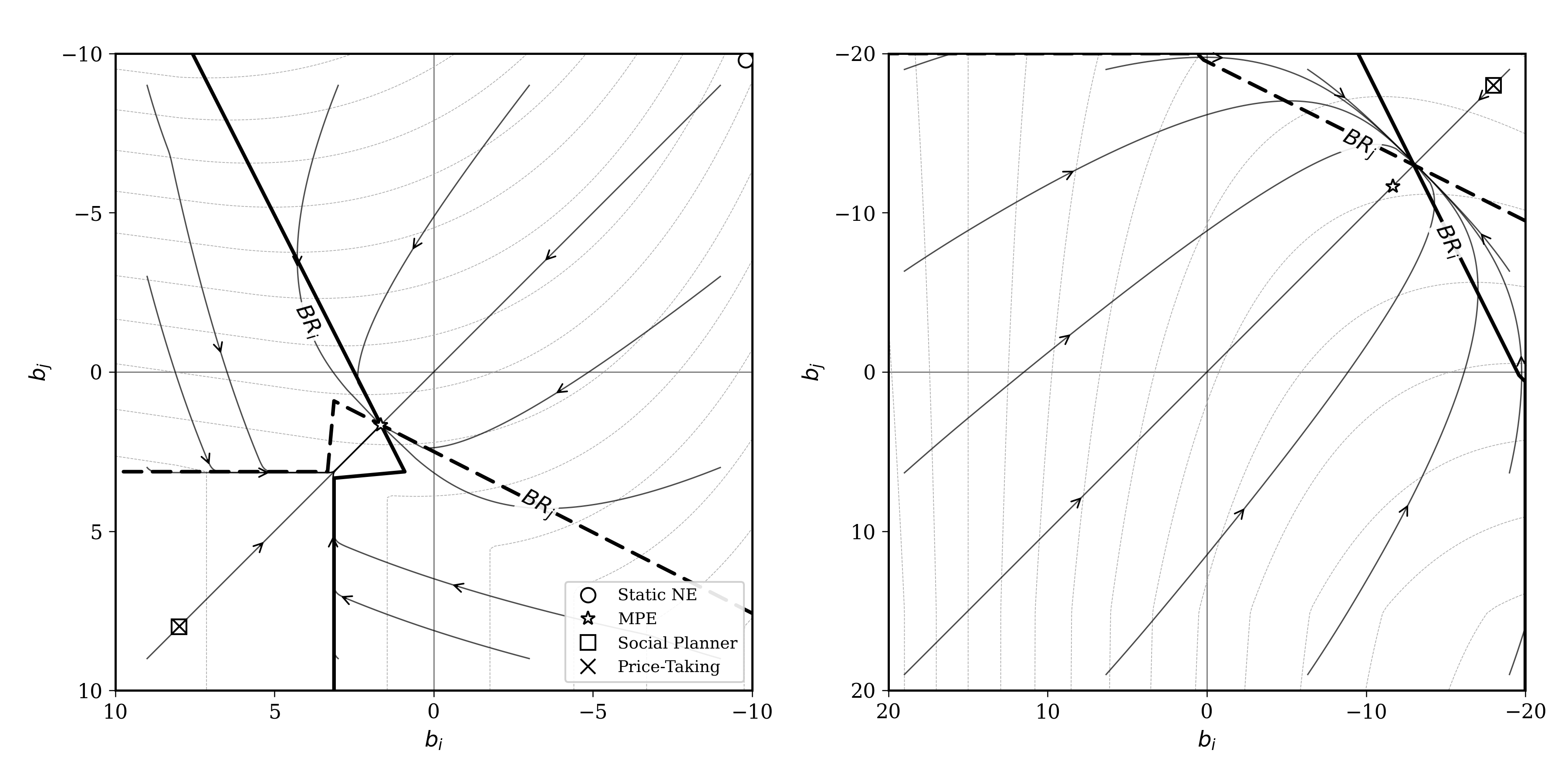}
\caption{Best-response curves and convergence to MPE in a two-prosumer, two-period game with net exporters.}
\label{fig:twostage-intuition}
\end{figure}

Figure~\ref{fig:twostage-intuition} illustrates this for the simplest case of two symmetric players and two periods with exogenous demand in both periods and a positive battery state in the first period. The stage game equilibrium in the first period (left panel) sees both players discharging their battery to the maximum. In MPE however, both players instead \emph{charge} their battery---incurring extra cost---in order to discharge more in the second period (right panel) when prices are higher, i.e. to engage in intertemporal arbitrage. But players charge less and subsequently also discharge less than either a price-taking player or the social planner would, implying they are withholding capacity because they correctly account for the price effect their discharging will have in the second period. As before, the gradient-play trajectories show that the best-response dynamics lead players to converge to the MPE. Note that the intersection of the best-response curves and the equilibrium in the second period do not coincide, because the restriction on the battery size limits players to discharging a total amount in the second period that is below the equilibrium level.

The observable-types setting can be interpreted as a complete-information benchmark for the private-types model: it corresponds to augmenting the private-information game with a public signal that reveals $(\theta,S_t)$ before play in each period. In that case, players' posteriors are degenerate and the period-by-period optimization that underlies Markov perfect equilibrium coincides with the complete-information best-response analysis in Proposition~\ref{prop:market-power}. In the private-types game, players instead choose $b_{it}$ based on their private history and beliefs about $(\theta_{-i},S_{-i,t})$. Nevertheless, the market power effects identified in Proposition~\ref{prop:market-power} translate directly to this case: conditional on any information set at which a player is an interior exporter, her marginal incentive accounts for the effect of $q_{it}$ on the producer price through $\partial_Q p_p$, so exports are shaded relative to the corresponding price-taking benchmark that removes own-price-impact terms from the stationarity condition. Similarly, if the marginal unit stored today is expected to be exported in a future period, the dynamic analogue arises because future sales are valued at (expected) marginal revenue rather than (expected) price. Thus, while Proposition~\ref{prop:market-power} provides sharp state-by-state comparative statics under full observability, in the private-information game the same mechanisms operate up to the approximation error $\varepsilon$.

\section{Computational framework}\label{sec:computation}

\subsection{Multi-agent differentiable market-clearing}\label{subsec:madh}

Each controllable agent~$i$ is represented by a feedforward neural network $f^{\phi^i}$ mapping her observation (battery state, time, demand, PV generation, and a price signal) to a battery action. Training proceeds by unrolling the full $T$-period trajectory and computing policy gradients via backpropagation through the entire computational graph. The method builds on the deep hedging paradigm of \citet{buehler2019deep} and is developed in detail in \cite{EschenbaumGreberSzehr2026}. Because the platform's clearing rule is a known, differentiable mapping from aggregate quantities to prices, the gradient $\nabla_{\phi^i}\bar{C}^i$ can be computed exactly through the chain $b_{it} \mapsto x_{it} \mapsto (M_t, Q_t) \mapsto (p_c, p_p) \mapsto c_{it}$, providing each agent with exact information about her price impact. At each period, agents' actions and market prices are determined jointly through a differentiable tatonnement procedure that iterates price estimates to convergence within the automatic differentiation graph (Appendix~\ref{app:training}).

The method provides a controlled switch between two learning modes. In \emph{strategic} mode, the gradient propagates through the clearing rule, so each agent's update accounts for her effect on prices; per-agent gradients ensure each agent optimises her own cost, not the system cost. In \emph{price-taking} mode, a stop-gradient operator detaches prices from the graph: agents optimise while treating prices as exogenous, corresponding to the price-taking benchmark of Section~\ref{sec:benchmark}. The two modes share identical architectures and hyperparameters; the only difference is whether the gradient passes through the price formation step, providing a controlled ablation for quantifying market power. As a welfare benchmark, we also train a centralised network minimising total grid settlement cost $\sum_t C_t(Y_t)$. Architecture, hyperparameters, and convergence diagnostics are reported in Appendix~\ref{app:computational}.

Convergence to Nash equilibrium is assessed via unilateral deviation regret. For each controllable agent, we freeze all other agents' policies and retrain a single agent.\footnote{Specifically, we freeze other players and retrain agent $i$ for $350$ episodes and then compute $\varepsilon_i := c_i^{\mathrm{eq}} - c_i^{\mathrm{BR}}$.} We train on $1{,}000$ stochastic days and evaluate on $100$ held-out days, averaging over five seeds. We report four metrics: the $\varepsilon$-regret; the withholding ratio $w := (q^{\mathrm{PT}} - q^{\mathrm{S}}) / q^{\mathrm{PT}}$; the average daily grid cost $\bar C := (1/D)\sum_{d} \sum_{t} C_t(Y_t^d)$; and per-agent payoffs (i.e., cost).

\subsection{Data and calibration}\label{subsec:data}

Time is discretized into $T=24$ hourly periods per day ($\Delta t=1\,$h). The baseline community consists of six agent types (Table~\ref{tab:agents}): three controllable types with batteries (prosumer~A, large prosumer~B, pure storage~C) and three passive types (PV generator~D, small PV~E, consumer~F). The battery is lossless ($S_{i,t+1}=S_{it}+b_{it}$) with charge/discharge rate $\bar b_i = \bar S_i / 3$ and initial state $S_{i1}=0$.\footnote{During training, initial battery states are drawn uniformly from $[0, \bar S_i]$ for exploration; all reported results use $S_{i1}=0$.} Although the model sets $V_{i,T+1}\equiv 0$, the computation includes a terminal value $V_{i,T+1} = P^E \cdot S_{i,T+1}$ to prevent degenerate end-of-day discharge; since this is linear in $S_{i,T+1}$, the results of Propositions~\ref{prop:private-existence} and~\ref{prop:market-power} are unaffected.

\begin{table}[h!]
\centering
\caption{Agent types in the baseline platform. }
\label{tab:agents}
\begin{tabular}{@{}llcccc@{}}
\toprule
Type & Role & PV (kWp) & Battery (kWh) & Demand (kWh/d) & Controllable \\
\midrule
A & Prosumer       & 8  & 8  & 13 & Yes \\
B & Large prosumer & 11 & 14 & 15 & Yes \\
C & Pure storage   & 0  & 10 & 0  & Yes \\
D & PV generator   & 8  & 0  & 10 & No  \\
E & Small PV       & 3  & 0  & 12 & No  \\
F & Consumer       & 0  & 0  & 14 & No  \\
\bottomrule
\end{tabular}
\end{table}

Table \ref{tab:agents} shows the parameterization in our baseline platform setting. We distinguish between six types of prosumers for tractability. Three possess controllable assets, in our baseline setting batteries, and three do not. Note that having pure consumers on the platform is important to study the market power incentives, as evening (net) demand for electricity is necessary for platform prices to be able to react to (net) sellers decisions. We fix the charge/discharge rate at $\bar b_i = \bar S_i / 3$.

PV generation follows a sinusoidal clear-sky curve (sunrise 6h, sunset 18h) scaled by agent-specific peak capacity $\bar s_i$ and perturbed by a mean-reverting AR(1) process with random cloud dips. Demand follows a composite time-of-use profile (morning and evening peaks, low midday trough) perturbed by an AR(1) process with random activity pulses and log-normal daily variation ($\approx \pm 20\%$). Full stochastic model parameters are in Appendix~\ref{app:computational}. Figure~\ref{fig:pv-demand-patterns} shows example patterns.

\begin{figure}[h!]
    \centering
    \begin{minipage}{0.48\textwidth}
        \centering
        \includegraphics[width=\linewidth]{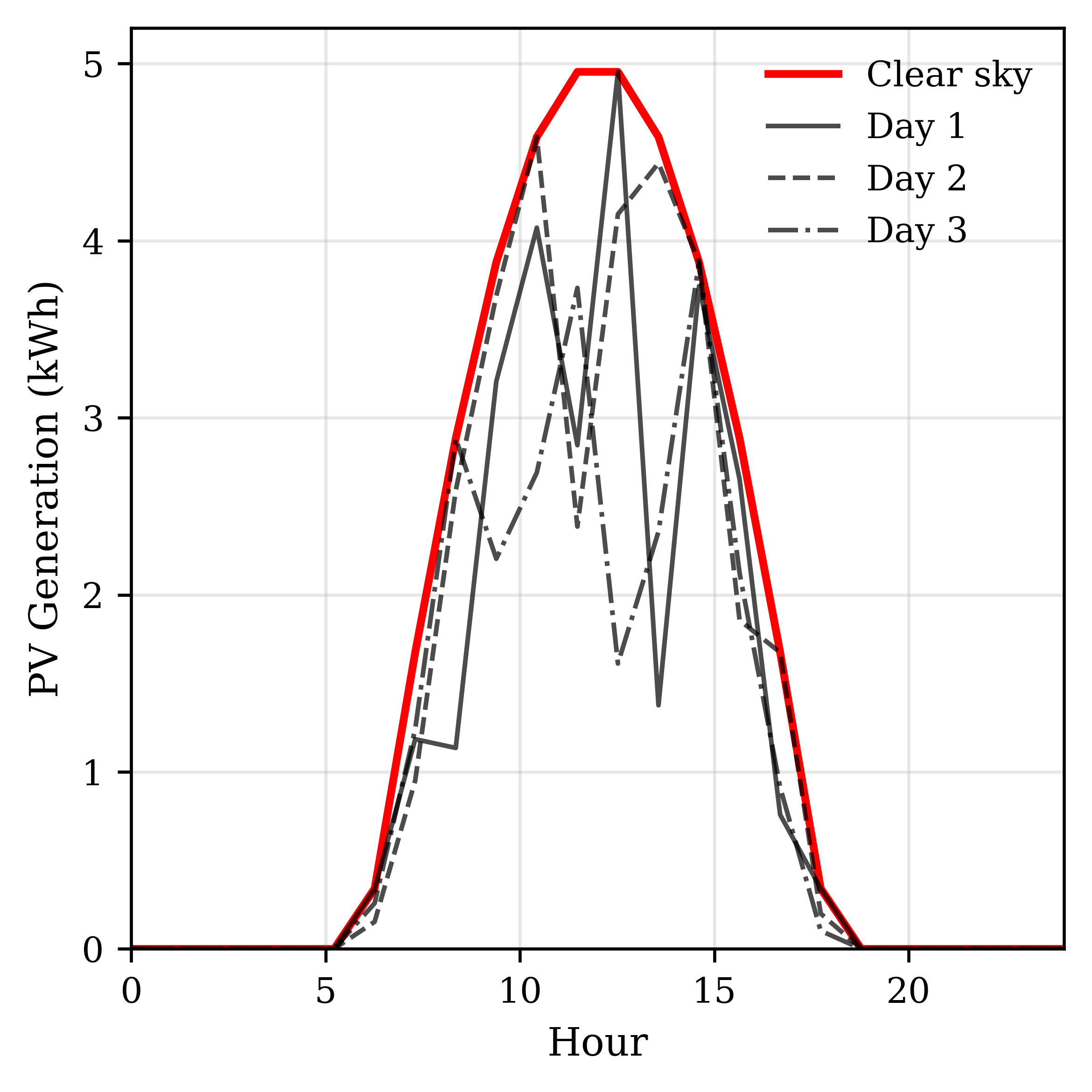}
        \caption*{(a) PV generation $s_i^t$}
    \end{minipage}\hfill
    \begin{minipage}{0.48\textwidth}
        \centering
        \includegraphics[width=\linewidth]{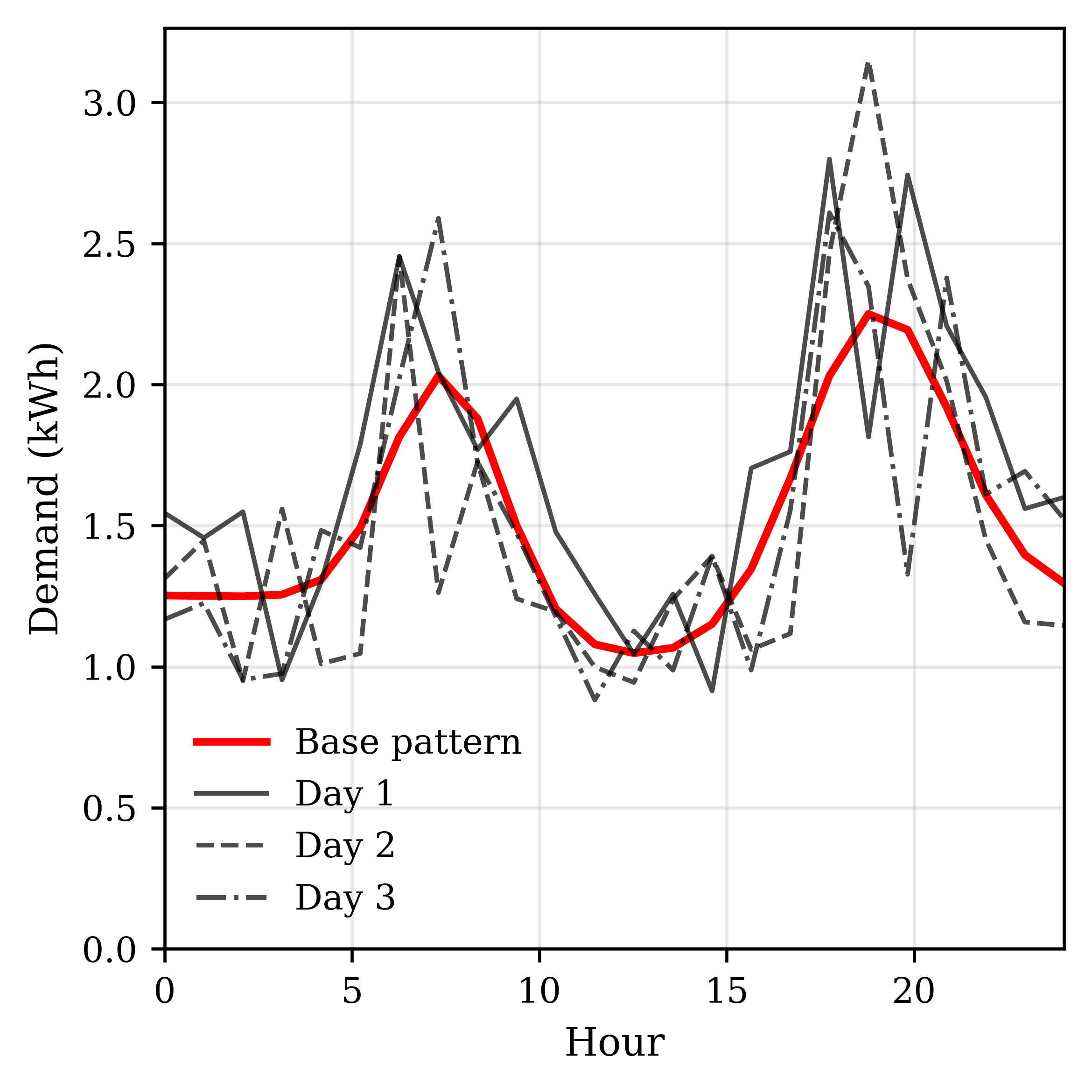}
        \caption*{(b) Demand $d_i^t$}
    \end{minipage}
    \caption{PV generation and demand patterns ($T=24$) with three example stochastic days.}
    \label{fig:pv-demand-patterns}
\end{figure}

Figure \ref{fig:pv-demand-patterns} shows the photovoltaic (PV) and demand underlying patterns and three realized example days for one agent. The actual realizations vary depending on the stochastic perturbations, as discussed above.

\subsection{Pricing mechanisms}\label{subsec:mechanisms}

The platform's pricing rule maps aggregate supply $Q_t$ (total selling power) and demand $M_t$ (total buying power) to a producer price $p_p$ and consumer price $p_c$, both within the tariff corridor $[P^E, P^I]$ where $P^E = 0.05\,$USD/kWh (feed-in tariff) and $P^I = 0.30\,$USD/kWh (retail tariff). We implement three mechanisms that are typical in the literature.

The baseline mechanism is the supply-demand ratio (SDR) applied for example in \citep{liu2017energy}. Formally, it is given by $\rho_t := Q_t / M_t$. The producer (sell) and consumer (buy) prices are, respectively
\begin{alignat*}{2}
p_p^{\mathrm{SDR}} &=
\left\{
\begin{array}{@{}l@{}}
P^E \\[4pt]
\dfrac{P^E \cdot P^I}{(P^I - P^E)\,\rho_t + P^E}
\end{array}
\right.
&\qquad
\begin{array}{@{}l@{}}
\text{if } \rho_t > 1, \\[4pt]
\text{if } \rho_t \le 1,
\end{array}
\\[8pt]
p_c^{\mathrm{SDR}} &=
\left\{
\begin{array}{@{}l@{}}
P^E \\[4pt]
p_p^{\mathrm{SDR}} \cdot \rho_t + P^I \cdot (1 - \rho_t)
\end{array}
\right.
&\qquad
\begin{array}{@{}l@{}}
\text{if } \rho_t > 1, \\[4pt]
\text{if } \rho_t \le 1.
\end{array}
\end{alignat*}
When supply exceeds demand ($\rho_t > 1$), all volume clears at the feed-in tariff. When demand exceeds supply, the producer price rises hyperbolically toward $P^I$ and the consumer price is a weighted average of the internal and external prices.

An alternative mechanism we consider is the mid-market rate (MMR), used in \citep{long2017peer}. $p_{\mathrm{mid}} := (P^E + P^I)/2 = 0.175\,$USD/kWh. The short side of the market faces $p_{\mathrm{mid}}$; the long side is adjusted toward the external tariff via pro-rata weighting:
\begin{align*}
p_p^{\mathrm{MMR}} &=
\begin{cases}
p_{\mathrm{mid}} & \text{if } M_t \ge Q_t, \\
(M_t \cdot p_{\mathrm{mid}} + (Q_t - M_t)\cdot P^E) / Q_t & \text{if } M_t < Q_t,
\end{cases}
\\[6pt]
p_c^{\mathrm{MMR}} &=
\begin{cases}
p_{\mathrm{mid}} & \text{if } M_t \le Q_t, \\
(Q_t \cdot p_{\mathrm{mid}} + (M_t - Q_t)\cdot P^I) / M_t & \text{if } M_t > Q_t.
\end{cases}
\end{align*}
The MMR mechanism produces less price variation than SDR, as the clearing price depends only on which side is short, not on the magnitude of the imbalance.

Finally, we also consider a simple linear pricing mechanism. This mechanism implements an inverse-demand clearing price
\[
p^{\mathrm{LIN}} = p_{\mathrm{mid}} - \kappa\,(Q_t - M_t), \qquad \text{clamped to } [P^E,\, P^I],
\]
where $\kappa = (P^I - P^E)/10 = 0.025\,$USD/kWh$^2$ spans the tariff corridor over a $\pm 5\,$kWh imbalance. The long side clears pro-rata at the internal price, with the remainder settled at the external tariff:
\begin{align*}
p_p^{\mathrm{LIN}} &= \min(M_t/Q_t,\,1)\cdot p^{\mathrm{LIN}} + (1 - \min(M_t/Q_t,\,1))\cdot P^E, \\
p_c^{\mathrm{LIN}} &= \min(Q_t/M_t,\,1)\cdot p^{\mathrm{LIN}} + (1 - \min(Q_t/M_t,\,1))\cdot P^I.
\end{align*}
The linear mechanism has a particularly simple structure. The clearing price is a smooth, monotone function of net supply, making it the most transparent for identifying market power effects.

Note that all three mechanisms produce prices in $[P^E, P^I]$ and satisfy Assumption~\ref{ass:prices1}. All three mechanisms satisfy the monotonicity conditions of Assumption~\ref{ass:prices2}: an increase in aggregate supply weakly reduces the producer price, and an increase in aggregate demand weakly raises the consumer price, with strict monotonicity whenever the price is in the interior of the tariff corridor $[P^E, P^I]$. The convexity and concavity conditions of Assumption~\ref{ass:prices2} hold globally for the MMR mechanism. For the SDR and linear mechanisms, the curvature conditions hold within each smooth pricing regime, but can fail at regime boundaries: the SDR mechanism has a convex kink in the export-revenue function at $Q=M$ and loses import-cost convexity at extreme supply-demand ratios, while the linear mechanism's price clamping at $P^E$ and $P^I$ creates kinks that can break both global concavity and convexity. For both mechanisms, we verify in Appendix~\ref{app:verification} that the total objective (stage payoff plus continuation value) remains concave in $b_{it}$ on the attainable set. The concavity of the equilibrium continuation value function required for pure-strategy MPE existence is also verified in that appendix.

\section{Computational Results}\label{sec:results}

We present computational results from three treatments designed to test the theoretical predictions of Section~\ref{sec:benchmark}: a baseline ordering that confirms the cost ranking of Proposition~\ref{prop:market-power}, a competition treatment that varies the number of storage operators, and a heterogeneity treatment that compares pricing mechanisms across the six agent types. All results are averaged over five random seeds; error bars or bands show $\pm 1$ standard deviation across seeds. The baseline community is described in Table~\ref{tab:agents}. For each treatment, we train price-taking and strategic benchmarks---and, where noted, a social planner---using the computational framework of Section~\ref{sec:computation}. Full experimental configurations and hyperparameters are reported in Appendix~\ref{app:computational}.


\subsection{Baseline ordering}\label{subsec:baseline}

We fix the SDR mechanism with private information and train all three regimes on the baseline community (Table~\ref{tab:agents}). Table~\ref{tab:baseline-results} reports grid cost, community cost, and the withholding ratio.

\begin{table}[h!]
\centering
\caption{Baseline cost ordering (SDR mechanism, five seeds). Standard deviations in parentheses. Grid cost is the community's total settlement with the external grid; since internal transfers cancel, grid cost equals the welfare-relevant total surplus (up to sign).}
\label{tab:baseline-results}
\begin{tabular}{@{}lc@{}}
\toprule
Regime & Grid cost (USD/day) \\
\midrule
Social planner    & 5.67\;(0.67) \\
Price-taking      & 5.71\;(0.58) \\
Strategic         & 6.07\;(0.45) \\
\midrule
Strategic $-$ Price-taking & $+$0.36\;[$+$6.3\%] \\
\addlinespace
Withholding ratio $w$ & 0.029\;(0.039) \\
\bottomrule
\end{tabular}
\end{table}

Grid cost follows the ordering predicted by Proposition~\ref{prop:market-power}: Planner (USD~5.67/day) $\leq$ Price-taking (5.71) $<$ Strategic (6.07). The strategic premium of USD~0.36/day ($6.3\%$) is the aggregate welfare loss from export withholding ($w = 0.029$). The planner--price-taking gap is small (USD~0.04/day), indicating that the primary source of inefficiency is strategic behaviour. Since internal transfers between players and the platform cancel, grid cost is the welfare-relevant measure of total surplus; per-agent costs, reported in subsequent tables, capture only the distribution of surplus across participants.

\begin{figure}[h!]
\centering
\includegraphics[width=0.98\textwidth]{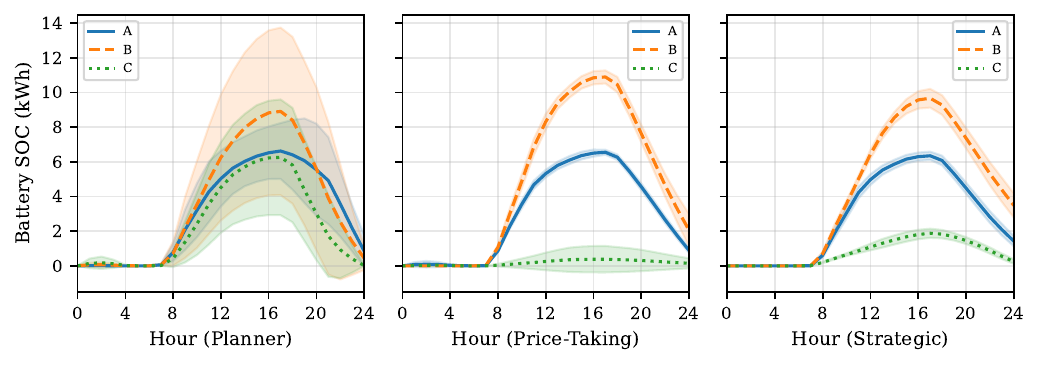}
\caption{Average battery SOC over 24 hours by regime (SDR, five seeds, $\pm 1$ s.d.). The planner uses all three batteries; storage agent~C is idle under price-taking. Strategic agents retain higher terminal SOC.}
\label{fig:baseline-battery}
\end{figure}

Figure~\ref{fig:baseline-battery} shows battery SOC over 24 hours. All regimes follow the expected diurnal cycle: charge during midday PV surplus, discharge during the evening peak. The planner uses all three batteries aggressively, including agent~C (peak SOC $6.3\,$kWh of $10\,$kWh). Under price-taking, C is nearly idle (peak $0.4\,$kWh): without own generation or demand, the bid--ask spread does not compensate enough round-trip trading. 

\begin{figure}[h!]
\centering
\includegraphics[width=0.98\textwidth]{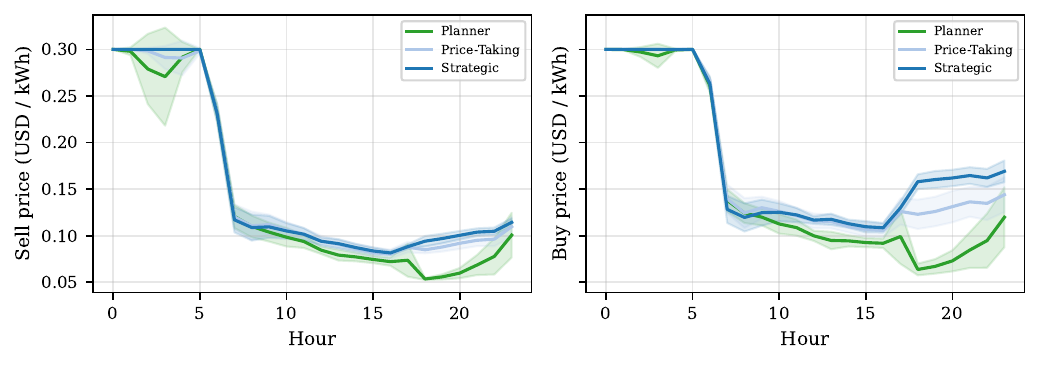}
\caption{Average sell and buy prices over 24 hours (SDR, five seeds, $\pm 1$ s.d.). Strategic play raises evening buy prices by $\approx 0.03\,$USD/kWh relative to price-taking.}
\label{fig:baseline-prices}
\end{figure}

Figure~\ref{fig:baseline-prices} shows market prices. Nighttime prices equal the retail tariff ($P^I = 0.30$); midday PV surplus pushes prices to USD~$0.08$--$0.13$/kWh across regimes with minimal inter-regime differences. The key distinction appears in the evening (hours $18$--$23$): strategic buy prices average USD~$0.16$/kWh versus USD~$0.13$/kWh under price-taking---a premium of USD~$0.03$/kWh ($+23\%$). This is the price-side reflection of quantity withholding: restricted evening exports shift the supply--demand ratio, raising the clearing price at the expense of passive consumers. The planner pushes evening prices down to USD~$0.07$--$0.12$/kWh through aggressive discharge.

Table~\ref{tab:regret} reports the unilateral-deviation
$\varepsilon$-regret for each controllable agent. The regret is computed by freezing all
opponents' policies at their strategic equilibrium values and retraining the focal agent
as a best-responder for 350 episodes (see Section~\ref{subsec:madh}). All regrets are
below USD~$0.01$/day in absolute value, confirming that the strategic outcome is an
approximate Nash equilibrium: no agent can reduce her daily cost by more than one cent
through unilateral deviation.

\begin{table}[h!]
\centering
\caption{Unilateral-deviation $\varepsilon$-regret (USD/day, SDR, five seeds).
$\varepsilon_i := C_i^{\mathrm{eq}} - C_i^{\mathrm{BR}}$; positive values mean the
equilibrium cost exceeds the best response (agent could improve), negative values mean
the best-responder fared worse. Standard deviations in parentheses.}
\label{tab:regret}
\begin{tabular}{@{}llcc@{}}
\toprule
Agent & Type & $\varepsilon_i$ (USD/day) & $|\varepsilon_i|/|C_i^{\mathrm{eq}}|$ \\
\midrule
A & Prosumer (8\,kWh)        & $+$0.005\;(0.004)  & 0.7\% \\
B & Large prosumer (14\,kWh) & $-$0.002\;(0.005)  & 0.5\% \\
C & Pure storage (10\,kWh)   & $+$0.006\;(0.008)  & --- \\
\bottomrule
\end{tabular}
\end{table}


\subsection{Competition and market power}\label{subsec:competition}

We vary the number of symmetric storage agents $N \in \{1,2,3\}$, holding total capacity fixed at $10\,$kWh ($\bar S = 10/N$ per agent, $\bar b = \bar S/3$). Two passive agents (D, E) provide supply and demand. The mechanism is SDR; all results average five seeds.

\begin{table}[h!]
\centering
\caption{Grid cost, withholding, and price impact by number of storage agents $N$ (SDR, five seeds). Standard deviations in parentheses. Price impact computed via automatic differentiation through the clearing rule.}
\label{tab:competition-results}
\begin{tabular}{@{}lccccc@{}}
\toprule
 & \multicolumn{2}{c}{Grid cost (USD/day)} & Gap & Withholding & Price impact \\
$N$ & Strategic & Price-taking & (\%) & $w$ & $|\partial p / \partial b_i|$ \\
\midrule
1 & 2.77\;(0.18) & 2.02\;(0.23) & $+$37.1 & 0.707\;(0.036) & 0.198\;(0.002) \\
2 & 2.45\;(0.23) & 2.10\;(0.21) & $+$16.7 & 0.342\;(0.068) & 0.162\;(0.002) \\
3 & 2.30\;(0.23) & 2.28\;(0.24) & $+$0.9  & $-$0.005\;(0.214) & 0.140\;(0.002) \\
\bottomrule
\end{tabular}
\end{table}

The results show the expected patterns. With a single storage agent, the strategic grid cost exceeds the price-taking benchmark by $37\%$ and the agent withholds $71\%$ of competitive exports. A duopoly reduces the gap to $17\%$ (withholding $34\%$). With three storage agents, strategic and price-taking outcomes become statistically indistinguishable. The largest welfare gain comes from the first entrant, consistent with classic oligopoly theory. Figure~\ref{fig:competition-battery} illustrates the mechanism directly. Total battery utilisation under strategic play rises monotonically in $N$. Whereas a single storage agent charges to a peak SOC of only $1.9\,$kWh (out of a total of $10\,$kWh), two competing agents reach $3.1\,$kWh and three competing agents reach $4.1\,$kWh. Competition therefore erodes each agent's incentive to withhold, pushing battery behaviour toward the competitive outcome.

\begin{figure}[h!]
\centering
\includegraphics[width=0.98\textwidth]{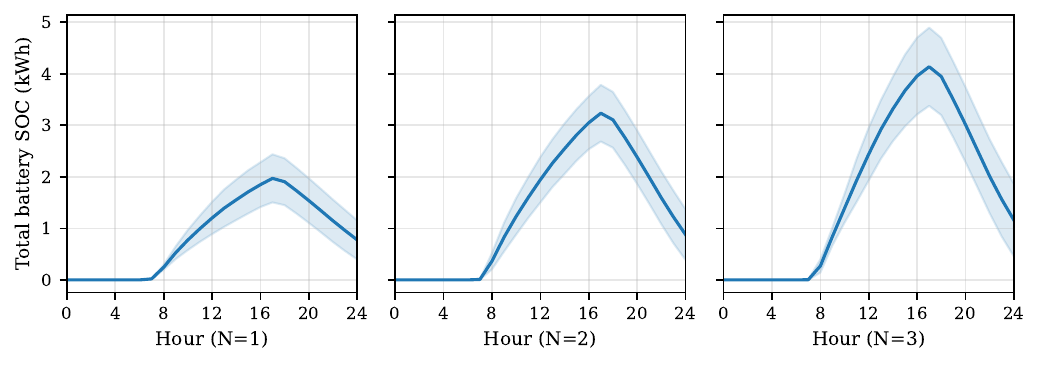}
\caption{Total battery SOC over 24 hours under strategic play, by number of competing storage agents $N$ (SDR, five seeds, $\pm 1$ s.d.). Battery utilisation increases monotonically with competition.}
\label{fig:competition-battery}
\end{figure}




Table~\ref{tab:competition-distributional} shows the resulting payoff (or cost) by agent type. With a single storage agent (a monopolist), this  agent earns USD~$0.12$/day in the approximate equilibrium compared to USD~$0.04$/day when ignoring their own price impact. This additional rents is redistributed from inflexible agents whose costs rise by USD~$0.36$ (D) and USD~$0.46$ (E). The monopolist captures only USD~$0.08$/day---roughly $10\%$ of passive agents' losses---with the remaining $90\%$ dissipated as deadweight through increased grid reliance. Under a duopoly, this rent reduces by half. With three competing storage agents, the distributional effects vanish.

\begin{table}[h!]
\centering
\caption{Per-agent costs by market structure (USD/day, five seeds). Battery cost is the per-agent average across $N$ symmetric agents. $\Delta = \text{Strategic} - \text{Price-taking}$; negative values indicate rent extraction. Standard deviations in parentheses.}
\label{tab:competition-distributional}
\begin{tabular}{@{}clccc@{}}
\toprule
$N$ & Agent & Price-taking & Strategic & $\Delta$ \\
\midrule
1 & Battery ($\times 1$)     & $-$0.04\;(0.02) & $-$0.12\;(0.01) & $-$0.08 \\
  & PV generator (D)         &    0.42\;(0.13)  &    0.77\;(0.09)  & $+$0.36 \\
  & Small PV (E)             &    1.62\;(0.12)  &    2.07\;(0.08)  & $+$0.46 \\
\addlinespace
2 & Battery ($\times 2$, avg) & $-$0.03\;(0.03) & $-$0.05\;(0.01) & $-$0.02 \\
  & PV generator (D)          &    0.47\;(0.11)  &    0.63\;(0.12)  & $+$0.16 \\
  & Small PV (E)              &    1.68\;(0.11)  &    1.88\;(0.12)  & $+$0.20 \\
\addlinespace
3 & Battery ($\times 3$, avg) & $-$0.02\;(0.04) & $-$0.03\;(0.01) & $\approx 0$ \\
  & PV generator (D)          &    0.55\;(0.12)  &    0.55\;(0.12)  & $\approx 0$ \\
  & Small PV (E)              &    1.79\;(0.13)  &    1.78\;(0.12)  & $\approx 0$ \\
\bottomrule
\end{tabular}
\end{table}

\subsection{Heterogeneity and mechanism design}\label{subsec:heterogeneity}
 
We now exploit the full heterogeneity of the baseline community to compare the distributional consequences of strategic play across pricing mechanisms. Table~\ref{tab:heterogeneity-results} reports the per-agent cost difference, where  we define $\Delta = \text{Strategic} - \text{Price-taking}$ in all three mechanisms. As before, we provide the results across five seeds.
 
\begin{table}[h!]
\centering
\caption{Per-agent strategic premium $\Delta_i = C^S_i - C^{PT}_i$ (USD/day, five seeds). Negative values indicate that the agent benefits from strategic play. The grid cost row reports the difference in total grid settlement cost. Standard deviations in parentheses.}

\label{tab:heterogeneity-results}
\begin{tabular}{@{}clccc@{}}
\toprule
 & & \multicolumn{3}{c}{$\Delta$: Strategic $-$ Price-taking (USD/day)} \\
\cmidrule(l){3-5}
Agent & Type & SDR & MMR & LINEAR \\
\midrule
A & Prosumer (8\,kWh)        & $-$0.018\;(0.009) & $-$0.002\;(0.023) & $-$0.054\;(0.028) \\
B & Large prosumer (14\,kWh) & $-$0.005\;(0.020) & $-$0.007\;(0.035) & $-$0.028\;(0.033) \\
C & Pure storage (10\,kWh)   & $-$0.032\;(0.008) &    0.000\;(0.000) & $-$0.001\;(0.002) \\
D & PV generator             & $+$0.120\;(0.053) & $+$0.007\;(0.009) & $+$0.042\;(0.015) \\
E & Small PV                 & $+$0.134\;(0.063) & $+$0.001\;(0.006) & $+$0.025\;(0.010) \\
F & Consumer                 & $+$0.139\;(0.070) & $-$0.001\;(0.007) & $+$0.017\;(0.009) \\
\midrule
  & Grid cost $\Delta$       & $+$0.491\;(0.182) & $-$0.011\;(0.023) & $+$0.042\;(0.038) \\
  & Withholding $w$          &    0.040\;(0.029)  & $-$0.003\;(0.008) &    0.033\;(0.008) \\
\bottomrule
\end{tabular}
\end{table}
 
The distributional pattern is consistent across mechanisms---battery owners gain, passive agents lose---but the magnitudes differ by an order of magnitude. Under SDR, the pattern follows directly from the baseline withholding mechanism: battery-equipped agents lower their costs by USD~$0.005$--$0.032$/day, while passive agents face increases of USD~$0.12$--$0.14$/day. The total grid cost increase (USD~$0.49$/day) exceeds battery owners' combined savings (USD~$0.06$/day) by a factor of six, so the welfare loss is predominantly deadweight.
 
The choice of pricing mechanism has a substantial effect on these findings. Under MMR, all per-agent gains from equilibrium play are indistinguishable from zero: the fixed mid-market rate eliminates the price sensitivity that strategic agents exploit ($w \approx 0$). With a linear price mechanism, withholding is comparable to SDR ($w = 0.033$) and battery owners benefit from strategic play, but the grid cost increase is negligible ($\Delta \approx 0$). The linear mechanism's smooth price response allows individual agents to exercise market power, but the surplus they extract comes almost entirely from redistribution rather than deadweight loss---grid settlement cost is roughly identical even as the division of surplus on the platform shifts toward battery owners.

\subsection{Discussion: magnitudes, information, and platform value}\label{subsec:platform-value}
 
We conclude the results section by assessing the value of the platform for participating players and examining how information design affects our results.
We assume that without the platform, each agent settles exclusively with the external grid at the retail tariff $P^I = 0.30\,$USD/kWh (imports) and feed-in tariff $P^E = 0.05\,$USD/kWh (exports). Table~\ref{tab:platform-value} projects annual costs with and without the platform (we assume the platform operates under the SDR mechanism).
 
\begin{table}[h!]
\centering
\caption{Annual cost projection (USD/year, SDR, five seeds). ``No platform'' assumes exclusive grid settlement at regulated tariffs with self-consumption of own PV. Platform saving $=$ No platform $-$
Price-taking. Market power cost $=$ Strategic $-$ Price-taking.}
\label{tab:platform-value}
\begin{tabular}{@{}lccccc@{}}
\toprule
& No platform & \multicolumn{2}{c}{Platform (annual)} & Platform & Market power \\
Agent & (grid only) & Price-taking & Strategic & saving & cost \\
\midrule
F (consumer, 14\,kWh/d)  & 1{,}585 & 921 & 971 & 664 & 51 \\
E (small PV, 12\,kWh/d)  & 884 & 588 & 637 & 295 & 49 \\
D (PV gen, 10\,kWh/d)    & 375 & 140 & 184 & 235 & 44 \\
Community (6 agents)      & 4{,}102 & 2{,}071 & 2{,}250 & 2{,}080 & 179 \\
\bottomrule
\end{tabular}
\end{table}

The pure consumer~F provides the cleanest comparison: without a platform, F settles exclusively with the grid at the retail tariff, paying USD~$1{,}585$/year. Platform trade reduces this to USD~$921$/year ($-42\%$), and strategic behaviour adds USD~$51$/year back---roughly $8\%$ of the platform's saving. The PV generator~D benefits most from the platform in relative terms ($-63\%$), because it can sell surplus generation internally at prices well above the feed-in tariff. The community-wide cost of market power is USD~$179$/year, of which battery owners capture roughly USD~$21$ in rents; the remaining $89\%$ is dissipated as deadweight loss through increased grid reliance. The platform is thus overwhelmingly valuable despite strategic distortions, but the welfare loss falls disproportionately on passive agents, reinforcing the case for careful mechanism design.
 
We further compare two information structures under the SDR mechanism (as before with five seeds): \emph{private information}, where each agent observes only her own type and battery state, and \emph{price observation}, where agents additionally observe past clearing prices. Table~\ref{tab:info-results} shows community cost and withholding under each structure.\footnote{The private-information treatment in Table~\ref{tab:info-results} uses an independent training run; the small difference from Table~\ref{tab:baseline-results} reflects cross-run variation.}
 
\begin{table}[h!]
\centering
\caption{Information design treatment (SDR, five seeds). Standard deviations in parentheses. Strategic premium is the percentage increase in grid cost: $(\bar C^S - \bar C^{PT})/\bar C^{PT}$.}
\label{tab:info-results}
\begin{tabular}{@{}lcccc@{}}
\toprule
Information & \multicolumn{2}{c}{Grid cost (USD/day)} & Strategic & Withholding \\
structure & Strategic & Price-taking & premium & $w$ \\
\midrule
Private           & 6.22\;(0.55) & 5.69\;(0.66) & 9.3\% & 0.051\;(0.025) \\
Price observation  & 6.75\;(0.94) & 5.64\;(0.57) & 19.7\% & 0.038\;(0.070) \\
\bottomrule
\end{tabular}
\end{table}
 
Price observation lowers price-taking grid cost by $0.9\%$ (USD~$5.69 \to 5.64$/day), showing the efficiency value of information disclosure, but it raises the strategic grid cost by $8.5\%$ (USD~$6.22 \to 6.75$/day), more than doubling the strategic premium from $9.3\%$ to $19.7\%$. The welfare effect of price transparency thus depends on whether agents exercise market power: richer signals help competitive agents coordinate with market conditions but also enable strategic agents to exploit their price impact more effectively. This echoes the classical IO debate on whether information facilitates or disciplines strategic behaviour \citep[cf.][]{stigler1964theory,calvano2020artificial}. We note that the cross-seed variance under strategic price observation is high (s.d.~$= 0.84$), so these magnitudes should currently be interpreted cautiously.

Experiments replicating the six-agent community $k$ times ($k \in \{1,2,3\}$, single seed) suggest that market power diminishes rapidly with community size. The strategic premium falls from $9.3\%$ at $N=6$ to near zero at $N=12$--$18$, consistent with the $1/N$ Cournot convergence in the competition treatment. Internal matching rates are roughly stable under replication of identical types ($\approx 12\%$ of trade) implying that platforms with genuinely diverse consumption and generation profiles---where stochastic variation across agents creates natural surpluses and deficits that partially substitute for battery-mediated arbitrage---would likely yield higher matching rates and are a natural direction for future work.


\section{Conclusion}\label{sec:conclusion}

This paper studies market power and platform design in decentralized electricity trading. We develop a dynamic game in which photovoltaic (PV) and battery-equipped prosumers trade on a platform that precommits to a pricing rule, and we characterize how the interaction between mechanism design, information design, and storage ownership shapes equilibrium outcomes.
 
Three findings stand out. First, strategic behaviour by battery owners generates a welfare loss that is predominantly deadweight: under the supply-demand ratio mechanism, strategic play raises grid cost by roughly 6\%, but battery owners capture only a small fraction of passive agents' losses. The distortion falls most heavily on passive agents who lack storage. Second, competition among storage operators is the most effective discipline on market power. Moving from monopoly to triopoly eliminates nearly all strategic distortion, consistent with standard Cournot convergence. This suggests that policies encouraging fragmented storage ownership, such as subsidies for residential batteries rather than community-scale installations controlled by a single operator, can have beneficial effects beyond the direct value of storage capacity. Third, the choice of pricing rule has first-order effects on the magnitude of distortions. A mid-market rate mechanism virtually eliminates strategic incentives by removing the price sensitivity that agents exploit, while a linear mechanism permits individual withholding but confines the resulting transfers largely to redistribution rather than deadweight loss. Platform operators thus have a concrete design lever to mitigate market power.
 
Our results also carry a more reassuring message. The platform is overwhelmingly valuable despite these distortions: a passive consumer saves roughly 40\% on annual electricity costs relative to exclusive grid settlement, and market power claws back only about 8\% of that saving. The policy question is therefore not whether to allow decentralized trading, but how to design the institutions that govern it.
 
Several limitations of the current analysis point to directions for future work. First, our community comprises six agents with stylized demand and generation profiles. Scaling the analysis to larger and more heterogeneous communities, where stochastic variation across agents creates natural surpluses and deficits that partially substitute for battery-mediated arbitrage, would clarify how quickly market power dissipates with community size. Preliminary results suggest rapid convergence, but a systematic treatment with diverse agent compositions remains open. Second, we consider three pricing mechanisms that are typical in the literature, but the design space is much richer. Exploring mechanisms that condition on richer signals, that incorporate dynamic elements such as intertemporal price linkages, or that allow the platform to optimise its rule endogenously would connect this work more directly to the mechanism design literature. Third, our model is a finite-horizon game with a fixed set of players. Extending the framework to an infinite-horizon setting with discounting would allow the analysis of long-run dynamics, including the possibility that patient agents sustain collusive outcomes through repeated interaction. Whether the Cournot-style withholding we document can be amplified by tacit coordination, and whether platform design can forestall such outcomes, are important open questions. Finally, our computational framework treats the pricing rule as given and compares mechanisms ex post. An ambitious extension would be to let the platform learn its own mechanism, along the lines of recent work on differentiable mechanism design, subject to regulatory constraints such as the tariff corridor and budget balance.
 
More broadly, our analysis illustrates that the industrial economics of electricity markets extends naturally to the emerging setting of prosumer trading platforms. The same forces that shape strategic behaviour in wholesale markets, namely market concentration, information asymmetries, and the design of price formation rules, operate at the prosumer scale. As decentralized generation and storage continue to expand, understanding these forces will be essential for designing platforms that deliver on the promise of local energy trading.

\newpage
\bibliographystyle{aer}
\bibliography{references}

@article{myerson2020epsilon,
author = {Roger B. Myerson AND Philip J. Reny},
title = {Perfect Conditional {$\varepsilon$}-Equilibria of Multi-Stage Games with Infinite Sets of Signals and Actions},
journal = {Econometrica},
volume = {88},
number = {2},
pages = {495-531},
doi = {https://doi.org/10.3982/ECTA13426},
url = {https://onlinelibrary.wiley.com/doi/abs/10.3982/ECTA13426},
eprint = {https://onlinelibrary.wiley.com/doi/pdf/10.3982/ECTA13426},
year = {2020}
}

@article{he2021energypawn,
  author  = {He, Li and Liu, Yuanzhi and Zhang, Jie},
  title   = {Peer-to-peer energy sharing with battery storage: Energy pawn in the smart grid},
  journal = {Applied Energy},
  volume  = {297},
  pages   = {117129},
  year    = {2021},
  doi     = {10.1016/j.apenergy.2021.117129}
}

@article{hoseinpour2024costrecovery,
  author  = {Hoseinpour, Milad and Haghifam, Mahmoud-Reza},
  title   = {Prosumers' cost recovery in peer-to-peer electricity markets},
  journal = {Electric Power Systems Research},
  volume  = {226},
  pages   = {109934},
  year    = {2024},
  doi     = {10.1016/j.epsr.2023.109934}
}

@article{williams2022storage,
  author  = {Williams, Olayinka and Green, Richard},
  title   = {Electricity storage and market power},
  journal = {Energy Policy},
  volume  = {164},
  pages   = {112872},
  year    = {2022},
  doi     = {10.1016/j.enpol.2022.112872}
}

@article{baake2023local,
  author  = {Baake, Pio and Schwenen, Sebastian and von Hirschhausen, Christian},
  title   = {Local Energy Markets},
  journal = {Journal of Industrial Economics},
  volume  = {71},
  number  = {3},
  pages   = {855--882},
  year    = {2023},
  doi     = {10.1111/joie.12338}
}

@article{kastius2022dynamic,
  author  = {Kastius, Alexander and Schlosser, Rainer},
  title   = {Dynamic pricing under competition using reinforcement learning},
  journal = {Journal of Revenue and Pricing Management},
  volume  = {21},
  number  = {1},
  pages   = {50--63},
  year    = {2022},
  doi     = {10.1057/s41272-021-00285-3}
}

@article{EschenbaumGreberSzehr2026,
  title={Differentiable Market Clearing for Multi-Agent Learning in Peer-to-Peer Electricity Trading},
  author={Eschenbaum, Nicolas and Greber, Nicolas and Szehr, Oleg},
  journal={arXiv},
  year={2026}
}

@article{calvano2020artificial,
  title={Artificial intelligence, algorithmic pricing, and collusion},
  author={Calvano, Emilio and Calzolari, Giacomo and Denicolo, Vincenzo and Pastorello, Sergio},
  journal={American Economic Review},
  volume={110},
  number={10},
  pages={3267--3297},
  year={2020},
  publisher={American Economic Association 2014 Broadway, Suite 305, Nashville, TN 37203}
}

@article{liu2017energy,
  title={Energy-sharing model with price-based demand response for microgrids of peer-to-peer prosumers},
  author={Liu, Nian and Yu, Xinghuo and Wang, Cheng and Li, Chaojie and Ma, Li and Lei, Jinyong},
  journal={IEEE Transactions on Power Systems},
  volume={32},
  number={5},
  pages={3569--3583},
  year={2017},
  publisher={IEEE}
}

@article{tushar2021p2p_review,
  title={Peer-to-peer trading in electricity networks: An overview},
  author={Tushar, Wayes and Saha, Tapan Kumar and Yuen, Chau and Smith, David and Poor, H Vincent},
  journal={IEEE transactions on smart grid},
  volume={11},
  number={4},
  pages={3185--3200},
  year={2020},
  publisher={IEEE}
}

@article{le2020peer,
  title={Peer-to-peer electricity market analysis: From variational to generalized Nash equilibrium},
  author={Le Cadre, H{\'e}l{\`e}ne and Jacquot, Paulin and Wan, Cheng and Alasseur, Cl{\'e}mence},
  journal={European Journal of Operational Research},
  volume={282},
  number={2},
  pages={753--771},
  year={2020},
  publisher={Elsevier}
}

@article{andrescerezo2023storing,
  author    = {Andr{\'e}s-Cerezo, David and Fabra, Natalia},
  title     = {Storing Power: Market Structure Matters},
  journal   = {The RAND Journal of Economics},
  year      = {2023},
  volume    = {54},
  number    = {1},
  pages     = {3--53},
  doi       = {10.1111/1756-2171.12429},
}

@article{schill2011strategic,
  author    = {Schill, Wolf-Peter and Kemfert, Claudia},
  title     = {Modeling Strategic Electricity Storage: The Case of Pumped Hydro Storage in {Germany}},
  journal   = {The Energy Journal},
  year      = {2011},
  volume    = {32},
  number    = {3},
  pages     = {59--87},
  doi       = {10.5547/ISSN0195-6574-EJ-Vol32-No3-3},
}

@article{fabra2023market,
  author    = {Fabra, Natalia and Imelda},
  title     = {Market Power and Price Exposure: Learning from Changes in Renewable Energy Regulation},
  journal   = {American Economic Journal: Economic Policy},
  year      = {2023},
  volume    = {15},
  number    = {4},
  pages     = {323--358},
  doi       = {10.1257/pol.20210221},
}

@article{fabra2021energy,
  author    = {Fabra, Natalia},
  title     = {The Energy Transition: An Industrial Economics Perspective},
  journal   = {International Journal of Industrial Organization},
  year      = {2021},
  volume    = {79},
  pages     = {102734},
  doi       = {10.1016/j.ijindorg.2021.102734},
}

@article{sioshansi2014storage,
  author    = {Sioshansi, Ramteen},
  title     = {When Energy Storage Reduces Social Welfare},
  journal   = {Energy Economics},
  year      = {2014},
  volume    = {41},
  pages     = {106--116},
  doi       = {10.1016/j.eneco.2013.09.027},
}

@article{garcia2001strategic,
  author    = {Garcia, Alfredo and Reitzes, James D. and Stacchetti, Ennio},
  title     = {Strategic Pricing When Electricity Is Storable},
  journal   = {Journal of Regulatory Economics},
  year      = {2001},
  volume    = {20},
  number    = {3},
  pages     = {223--247},
  doi       = {10.1023/A:1012281721110},
}

@article{bushnell2003looking,
  author    = {Bushnell, James},
  title     = {A Mixed Complementarity Model of Hydrothermal Electricity Competition in the {Western United States}},
  journal   = {Operations Research},
  year      = {2003},
  volume    = {51},
  number    = {1},
  pages     = {80--93},
  doi       = {10.1287/opre.51.1.80.12800},
}

@article{parag2016prosumer,
  author    = {Parag, Yael and Sovacool, Benjamin K.},
  title     = {Electricity Market Design for the Prosumer Era},
  journal   = {Nature Energy},
  year      = {2016},
  volume    = {1},
  number    = {4},
  pages     = {16032},
  doi       = {10.1038/nenergy.2016.32},
}

@article{mengelkamp2018brooklyn,
  author    = {Mengelkamp, Esther and G{\"a}rttner, Johannes and Rock, Kerstin and Kessler, Scott and Orsini, Lawrence and Weinhardt, Christof},
  title     = {Designing Microgrid Energy Markets: A Case Study: The {Brooklyn Microgrid}},
  journal   = {Applied Energy},
  year      = {2018},
  volume    = {210},
  pages     = {870--880},
  doi       = {10.1016/j.apenergy.2017.06.054},
}

@article{morstyn2018federated,
  author    = {Morstyn, Thomas and Farrell, Niall and Darby, Sarah J. and McCulloch, Malcolm D.},
  title     = {Using Peer-to-Peer Energy-Trading Platforms to Incentivize Prosumers to Form Federated Power Plants},
  journal   = {Nature Energy},
  year      = {2018},
  volume    = {3},
  number    = {2},
  pages     = {94--101},
  doi       = {10.1038/s41560-017-0075-y},
}

@techreport{irena2020p2p,
  author    = {{IRENA}},
  title     = {Innovation Landscape Brief: Peer-to-Peer Electricity Trading},
  institution = {International Renewable Energy Agency},
  year      = {2020},
  address   = {Abu Dhabi},
}

@article{zhang2020cournot,
  author    = {Zhang, Yizhou and Gu, Chenghong and Yan, Xiangning and Li, Furong},
  title     = {Cournot Oligopoly Game-Based Local Energy Trading Considering Renewable Energy Uncertainty Costs},
  journal   = {Renewable Energy},
  year      = {2020},
  volume    = {159},
  pages     = {1117--1127},
  doi       = {10.1016/j.renene.2020.06.062},
}

@article{buehler2019deep,
  author    = {Buehler, Hans and Gonon, Lukas and Teichmann, Josef and Wood, Ben},
  title     = {Deep Hedging},
  journal   = {Quantitative Finance},
  year      = {2019},
  volume    = {19},
  number    = {8},
  pages     = {1271--1291},
  doi       = {10.1080/14697688.2019.1571683},
}

@article{wolfram1999measuring,
  author    = {Wolfram, Catherine D.},
  title     = {Measuring Duopoly Power in the {British} Electricity Spot Market},
  journal   = {American Economic Review},
  year      = {1999},
  volume    = {89},
  number    = {4},
  pages     = {805--826},
  doi       = {10.1257/aer.89.4.805},
}

@article{hortacsu2008understanding,
  author    = {Horta\c{c}su, Ali and Puller, Steven L.},
  title     = {Understanding Strategic Bidding in Multi-Unit Auctions: A Case Study of the {Texas} Electricity Spot Market},
  journal   = {The RAND Journal of Economics},
  year      = {2008},
  volume    = {39},
  number    = {1},
  pages     = {86--114},
  doi       = {10.1111/j.0741-6261.2008.00005.x},
}

@inproceedings{long2017peer,
  title={Peer-to-peer energy trading in a community microgrid},
  author={Long, Chao and Wu, Jianzhong and Zhang, Chenghua and Thomas, Lee and Cheng, Meng and Jenkins, Nick},
  booktitle={2017 IEEE power \& energy society general meeting},
  pages={1--5},
  year={2017},
  organization={IEEE}
}

@article{stigler1964theory,
  author  = {Stigler, George J.},
  title   = {A Theory of Oligopoly},
  journal = {Journal of Political Economy},
  year    = {1964},
  volume  = {72},
  number  = {1},
  pages   = {44--61}
}

@article{etesami2018stochastic,
  author    = {Etesami, S. Rasoul and Saad, Walid and Mandayam, Narayan B. and Poor, H. Vincent},
  title     = {Stochastic Games for the Smart Grid Energy Management with Prospect Prosumers},
  journal   = {IEEE Transactions on Automatic Control},
  year      = {2018},
  volume    = {63},
  number    = {8},
  pages     = {2327--2342},
  doi       = {10.1109/TAC.2018.2797217},
}

@article{zheng2022aieconomist,
  title={The {AI} Economist: Taxation policy design via two-level deep multiagent reinforcement learning},
  author={Zheng, Stephan and Trott, Alexander and Srinivasa, Sunil and Parkes, David C. and Socher, Richard},
  journal={Science Advances},
  volume={8},
  number={18},
  pages={eabk2607},
  year={2022}
}

@article{curry2022finding,
  title={Finding General Equilibria in Many-Agent Economic Simulations Using Deep Reinforcement Learning},
  author={Curry, Michael and Trott, Alexander and Phade, Soham and Bai, Yu and Zheng, Stephan},
  journal={arXiv preprint arXiv:2201.01163},
  year={2022}
}


\newpage
\appendix

\section{Proofs}\label{app:proofs}

\subsection{Proof of Proposition \ref{prop:private-existence}}
\begin{proof}
We verify that the required conditions R.1--R.5 of Myerson \& Reny (2020, Definition~9.1) are met. For this verification, we use an equivalent representation of the battery feasibility constraint that keeps each player's available action set independent of her signal (as required by R.1), while leaving the set of implemented battery flows (and thus the economic outcomes) unchanged.

Formally, at each date $t\in\{1,\dots,T\}$, let player $i$ choose an intended action $b_{it}\in[\underline b_i,\bar b_i]$
(independent of $S_{it}$). Given the current state of charge $S_{it}$, define the implemented battery flow
as the Euclidean projection onto the feasible interval,
\[
\tilde b_{it}\;:=\;\Pi_{B_i(S_{it})}(b_{it}) \;\in\; B_i(S_{it}),
\qquad\text{so that}\qquad
S_{i,t+1}\;=\;S_{it}+\tilde b_{it}.
\]
Imports/exports and payments are computed exactly as in Section~\ref{subsec:setup} but using the implemented flow $\tilde b_{it}$ in place of $b_{it}$ (so $\tilde x_{it}=\tilde b_{it}-\ell_i^t(\theta_i)$, $\tilde m_{it}=\tilde x_{it}^+$, $\tilde q_{it}=\tilde x_{it}^-$, and $\tilde c_{it}=p_c(\tilde M_t,\tilde Q_t)\tilde m_{it}-p_p(\tilde M_t,\tilde Q_t)\tilde q_{it}$). Because the projection satisfies $\Pi_{B_i(S_{it})}(b_{it})=b_{it}$ whenever $b_{it}\in B_i(S_{it})$, and because any $b_{it}\notin B_i(S_{it})$ induces the same implemented flow and next state as the boundary action $\tilde b_{it}\in B_i(S_{it})$, this reformulation does not change the set of attainable outcome paths under optimal play; it only embeds the feasibility constraint into the state transition. We can now verify the conditions R.1--R.5 for this equivalent representation in turn.

\emph{Condition R.1.}
The game has $T+1$ dates: at date $0$, nature draws the type profile $\theta\sim\pi$; at dates $t=1,\dots,T$, each player $i$ simultaneously chooses $b_{it}\in[\underline b_i,\bar b_i]$. Thus, taking $J=\{1\}$ in Myerson--Reny's notation, we may set $A_{it}=[\underline b_i,\bar b_i]$ for all signals $s_{it}$, so $\Phi_{it}(s_{it})=A_{it}$ for every $s_{it}$ as required by R.1. Nature has no moves at $t\ge 1$ (so $A_{0t}$ is a singleton for $t\ge 1$).

\emph{Condition R.2.}
Player $i$'s signal at $t=1$ is $s_{i1}=\theta_i$, a coordinate of nature's date-$0$ move. For $t\ge 2$, define
\[
s_{it}=(\theta_i,b_{i1},\dots,b_{i,t-1}),
\]
which is a literal projection of nature's date-$0$ coordinate and of player $i$'s own past action coordinates. In particular, $S_{it}$ is not itself a coordinate in the outcome history; it is computed from the signal via the recursion $S_{i1}=0$ and $S_{i,\tau+1}=S_{i\tau}+\Pi_{B_i(S_{i\tau})}(b_{i\tau})$ for $\tau<t$.

\emph{Conditions R.3--R.4.}
Each $A_{it}=[\underline b_i,\bar b_i]$ is a nonempty compact metric space and each $\Theta_i$ is finite (hence compact). Player $i$'s total payoff is $\sum_{t=1}^T u_{it}$ with $u_{it}=-\tilde c_{it}$, where $\tilde c_{it}$ is a continuous function of $(\theta,(b_{jt})_{j\in I},S_t)$ because: (i) projection onto an interval is continuous, (ii) the induced quantities $(\tilde m_{it},\tilde q_{it})$ are continuous in $\tilde b_{it}$, and (iii) $p_c$ and $p_p$ are continuous on the attainable set $K$ by Assumption~\ref{ass:prices1}. Thus $u_i$ is continuous in all actions and types, satisfying R.4.

\emph{Condition R.5.}
Nature moves only at date $0$, drawing $\theta$ from $\pi$ on the finite set $\Theta$. Finite-type priors satisfy the regularity requirement R.5 under the discrete topology: the Radon--Nikodym derivative is trivially continuous and its strict-positivity set is closed. At dates $t\ge 1$, nature's move set is a singleton.

Since all conditions R.1--R.5 are satisfied, Theorem~9.3 of Myerson \& Reny (2020) applies, and therefore for every $\varepsilon>0$ the private-information game possesses a perfect conditional $\varepsilon$-equilibrium.

In addition, we can also show that this result holds if, instead of excluding payments from signals, player $i$ observes her own period-$t$ bill. Then the above projected-signal verification can be maintained by adding a (purely formal) reporting stage: after players choose $(b_{jt})_{j\in I}$ at each date $t$, an additional player $0$ chooses a report vector $\hat c_t=(\hat c_{it})_{i\in I}$ in a compact interval $C:=\prod_{i\in I}[-\bar c,\bar c]$, where $\bar c<\infty$ bounds all attainable bills (the existence of such a bound follows directly from bounded flows, finite $\Theta$, and Assumption~\ref{ass:prices1}). Prosumers' period payoffs are $u_{it}=-\hat c_{it}$, and player $0$'s payoff is
\[
u_{0t}\;:=\;-\sum_{i\in I}\bigl(\hat c_{it}-\tilde c_{it}\bigr)^2,
\]
so player $0$ (approximately) reports the computed bill $\tilde c_{it}$. Define prosumer $i$'s signal to include her own past reports $(\hat c_{i1},\dots,\hat c_{i,t-1})$ in addition to $(\theta_i,b_{i1},\dots, \allowbreak b_{i,t-1})$. Because $\hat c_{it}$ is then a coordinate of player~$0$'s action history, the signal remains a literal projection and Conditions R.1--R.5 continue to hold (with player~$0$ added). Hence Theorem~9.3 yields existence of a perfect conditional $\varepsilon$-equilibrium for every $\varepsilon>0$ also in the variant with observed own bills.
\end{proof}

\subsection{Proof of Proposition \ref{prop:market-power}}
\begin{proof}
We prove the statements in turn.
\emph{Part (i)}. Fix a period $t$, a public state $(\theta, S_t)$, and continuous continuation values
$\{W_{i,t+1}(\theta,\cdot)\}_{i\in I}$. Each player's action set $B_i(S_{it})$ is a nonempty compact convex
interval, and player $i$'s payoff
\[
u_{it}(b_{it}, b_{-i,t}; \theta, S_t) \;+\; W_{i,t+1}(\theta, S_t + b_t)
\]
is continuous in all actions (Assumption~\ref{ass:prices1} and continuity of $W_{i,t+1}$).

Now suppose $W_{i,t+1}(\theta,\cdot)$ is concave in $S_{i,t+1}$ for each $i$. Fix $b_{-i,t}$. We show player $i$'s payoff is concave in $b_{it}$. First, on the export side ($b_{it}<\ell_i^t(\theta_i)$), let $q=\ell_i^t(\theta_i)-b_{it}>0$. Then the stage payoff term equals $q\,p_p(M_{-i,t},Q_{-i,t}+q)$, which is concave in $q$ by Assumption~\ref{ass:prices2}(ii), hence concave in $b_{it}$. Second, on the import side ($b_{it}>\ell_i^t(\theta_i)$), let $m=b_{it}-\ell_i^t(\theta_i)>0$. Then the stage payoff term equals $-m\,p_c(M_{-i,t}+m,Q_{-i,t})$, which is concave in $m$ by Assumption~\ref{ass:prices2}(i), hence concave in $b_{it}$. Third, at the kink $b_{it}=\ell_i^t(\theta_i)$, the left derivative with respect to $b_{it}$ is $-p_p(M_t,Q_t)$ and the right derivative is $-p_c(M_t,Q_t)$; since $p_p\le p_c$ (Assumption~\ref{ass:prices1}), the left derivative weakly exceeds the right derivative, so the stage payoff is globally concave in $b_{it}$. Finally, since $S_{i,t+1}=S_{it}+b_{it}$ is affine, concavity of $W_{i,t+1}(\theta,\cdot)$ in $S_{i,t+1}$ implies $b_{it}\mapsto W_{i,t+1}(\theta,S_t+b_t)$ is concave. Hence player $i$'s total payoff is concave in $b_{it}$.

Therefore, for each $i$, the best-response correspondence $BR_i(\cdot)$ is nonempty and compact-valued, and by concavity of the objective over a convex domain it is convex-valued (a weakly concave function on an interval has a convex argmax set). By Berge's maximum theorem, $\prod_{i\in I} BR_i(\cdot)$ is upper hemicontinuous. Hence the product correspondence $\prod_{i\in I} BR_i(\cdot)$ satisfies Kakutani's conditions, and the period-$t$ continuation game admits a pure-strategy Nash equilibrium. Moreover, the one-shot game is the special case $W_{i,t+1}\equiv 0$, so a pure-strategy Nash equilibrium exists.

To complete the backward-induction argument for the dynamic game (with a continuum of states), we select equilibria measurably at each date. Fix $t$ and continuation values $\{W_{i,t+1}(\theta,\cdot)\}_{i\in I}$. Let the public-state space be
\[
\Omega \;:=\; \Theta \times \prod_{j\in I}[0,\bar S_j],
\]
and let the action-product space be $A:=\prod_{j\in I}[\underline b_j,\bar b_j]$. For each $(\theta,S)\in\Omega$, write $B(\theta,S):=\prod_{j\in I}B_j(S_j)\subset A$, and define the set of pure Nash equilibria of the period-$t$ continuation game by the correspondence $\mathcal{E}_t:\Omega\rightrightarrows A$,
\[
\mathcal{E}_t(\theta,S)\;:=\;\Bigl\{b\in B(\theta,S):\ b_i\in BR_i(\theta,S,b_{-i})\ \ \forall i\in I\Bigr\}.
\]

We can then show that $\mathcal{E}_t$ admits a Borel measurable selection. Consider that we already showed $\mathcal{E}_t(\theta,S)\neq\emptyset$ for all $(\theta,S)$ by Kakutani; compactness then follows because $\mathcal{E}_t(\theta,S)\subset B(\theta,S)$ and $B(\theta,S)$ is compact.

Now we proceed to show that $\operatorname{Gr}(\mathcal{E}_t):=\{((\theta,S),b): b\in\mathcal{E}_t(\theta,S)\}$ is closed. Take any sequence $\{((\theta^n,S^n),b^n)\}_{n\ge 1}\subset \operatorname{Gr}(\mathcal{E}_t)$ with $((\theta^n,S^n),b^n)\to((\theta,S),b)$ in $\Omega\times A$. First, feasibility: for each $i$, $B_i(S_i)=[\max\{\underline b_i,-S_i\},\,\min\{\bar b_i,\bar S_i-S_i\}]$ has endpoints continuous in $S_i$,
hence $(\theta,S)\mapsto B(\theta,S)$ has closed graph, implying $b\in B(\theta,S)$. Second, best-response optimality: by Berge's maximum theorem (continuity of the objective and compact feasible sets), each $BR_i$ has closed graph. Because $b_i^n\in BR_i(\theta^n,S^n,b_{-i}^n)$ and $(\theta^n,S^n,b_{-i}^n,b_i^n)\to(\theta,S,b_{-i},b_i)$, we obtain $b_i\in BR_i(\theta,S,b_{-i})$ for every $i$. Hence $b\in\mathcal{E}_t(\theta,S)$, proving $\operatorname{Gr}(\mathcal{E}_t)$ is closed.

Then it follows that since $\Omega$ is a standard Borel space (finite $\Theta$ times a compact metric product) and $A$ is a compact metric space, a nonempty closed-valued correspondence with closed graph admits a Borel measurable selection (Kuratowski--Ryll-Nardzewski). 

We can then proceed by backward induction using such measurable selections. At $t=T$, $W_{i,T+1}\equiv 0$, so the period-$T$ game has a nonempty equilibrium correspondence $\mathcal{E}_T$ and we select a Borel measurable $b_T^*(\theta,S_T)\in\mathcal{E}_T(\theta,S_T)$. Define the (degenerate) Markov strategy $\beta_{iT}^*(\cdot\mid\theta,S_T)$ that assigns probability~$1$ to $b_{iT}^*(\theta,S_T)$, and define the induced value $W_{iT}^*(\theta,S_T)$ accordingly.

Now fix $t<T$ and assume measurable $\{b_{\tau}^*\}_{\tau=t+1}^T$ (equivalently $\{\beta_{\tau}^*\}$) and values $\{W_{i,t+1}^*\}_{i\in I}$ have been constructed. Applying the argument above to the period-$t$ continuation game with continuation $W_{i,t+1}^*$ yields a nonempty equilibrium correspondence $\mathcal{E}_t$, and we select a Borel measurable $b_t^*(\theta,S_t)\in\mathcal{E}_t(\theta,S_t)$. Let $\beta_t^*$ be the associated degenerate Markov strategy profile and define $W_{it}^*(\theta,S_t)$ by the recursive equation in the main text. Iterating for $t=T,T-1,\dots,1$ yields a pure-strategy Markov perfect equilibrium (MPE).

Finally, note that equilibrium existence in the dynamic observable-types game does not require the concavity condition above. By the same argument as in Proposition~\ref{prop:private-existence} (Myerson \& Reny, 2020, Theorem~9.3), for every $\varepsilon>0$ the observable-types dynamic game admits a perfect conditional $\varepsilon$-equilibrium. Formally, one may use the same intended-action/projection representation as in Proposition~\ref{prop:private-existence} to make players' available action sets independent of signals (Condition~R.1), and add a purely formal public reporting stage after each action profile that announces the public state (and, if included, realized payments) as an explicit coordinate of the outcome history so that signals are literal projections (Condition~R.2). The regularity conditions R.1--R.5 then continue to hold. The concavity condition is therefore used only to guarantee existence of a pure-strategy MPE.

\emph{Part (ii).}
Fix a period $t$, state $(\theta,S_t)$, and the associated one-shot game ($W_{i,t+1}\equiv 0$). Fix player $i$ and opponents' actions $b_{-i,t}$, and write
\[
q:=\ell_i^t(\theta_i)-b_{it}>0
\]
on the export branch, so that $M_t=M_{-i,t}$ and $Q_t=Q_{-i,t}+q$.

The strategic marginal payoff from increasing exports is
\[
D^\ast(q)=p_p(M_t,Q_t)+q\,\partial_Q p_p(M_t,Q_t).
\]
At an interior strategic export solution $q_{it}^\ast$, we have $D^\ast(q_{it}^\ast)=0$, hence
\[
p_p(M_t,Q_t)\big|_{q=q_{it}^\ast}
=
-\,q_{it}^\ast\,\partial_Q p_p(M_t,Q_t)\big|_{q=q_{it}^\ast}\ge 0,
\]
with strict inequality whenever $\partial_Q p_p<0$ at the relevant arguments. 
By the definition of the price-taking benchmark, the price-taking marginal payoff on the export branch is
\[
D^{PT}(q)=p_p(M_t,Q_t),
\]
which is weakly decreasing in $q$ by Assumption~\ref{ass:prices2}(ii). Since $D^{PT}(q_{it}^\ast)\ge 0$, every price-taking optimizer satisfies
\[
q_{it}^{PT}\ge q_{it}^\ast,
\]
with strict inequality if $\partial_Q p_p<0$ at the relevant arguments.

For the social planner comparison, holding $b_{-i,t}$ fixed, the one-shot planner chooses $b_{it}$ to maximize $-C_t(Y_t)$. On the export branch, increasing $q$ lowers $Y_t$ one-for-one, so the planner's marginal payoff is
\[
C_t'(Y_t)\in [P^E,P^I]\subseteq \mathbb R_+.
\]
Hence the planner chooses the maximal feasible export on the export branch. The same is true for the price-taking benchmark, since its marginal payoff is
\[
D^{PT}(q)=p_p(M_t,Q_t)\ge P^E\ge 0
\]
by Assumption~\ref{ass:prices1}. Therefore (up to tie-breaking),
\[
q_{it}^{PT}=q_{it}^{SP}.
\]
Combining the two comparisons yields
\[
q_{it}^{\ast}\le q_{it}^{PT}=q_{it}^{SP},
\]
with strict inequality in the first comparison under the stated strictness condition.

\emph{Part (iii).}
Fix $t<T$, player $i$, state $(\theta,S_t)$, and a continuation path $(b_{-i,\tau})_{\tau=t}^T$. Let $W_{i,\tau}^{\ast}$ and $W_{i,\tau}^{PT}$ denote the continuation values induced by the strategic and price-taking problems against this fixed path, and define the marginal continuation values
\[
\lambda_{i,\tau}^{k}(s):=\partial_s W_{i,\tau}^{k}(s),\qquad k\in\{\ast,PT\},
\]
on the relevant continuation-state interval. Assume that the marginal unit of storage at any future date $\tau\ge t+1$ is either carried forward or exported. Hence, for each $\tau$ and relevant $s$, the shadow value satisfies
\[
\lambda_{i,\tau}^{k}(s)=\max\{\rho_{i,\tau}^{k}(s),\,\lambda_{i,\tau+1}^{k}(s)\},
\qquad \lambda_{i,T+1}^{k}\equiv 0,
\]
where $\rho_{i,\tau}^{k}(s)$ is the marginal payoff from exporting the marginal unit at date $\tau$.
By the definition of the price-taking benchmark, on the export branch
\[
\rho_{i,\tau}^{\ast}(s)\le \rho_{i,\tau}^{PT}(s)
\]
because $\partial_Q p_p\le 0$ (Assumption~\ref{ass:prices2}(ii)). Backward induction on $\tau$ then yields
\[
\lambda_{i,t+1}^{\ast}(s)\le \lambda_{i,t+1}^{PT}(s)
\]
on the relevant interval.

Now consider period $t$ on the import branch and write $s=S_{it}+b_{it}$. The strategic and price-taking marginal payoffs from increasing $b_{it}$ are
\begin{align*}
D^\ast(b_{it})
&=
-p_c(M_t,Q_t)-m_{it}(b_{it})\,\partial_M p_c(M_t,Q_t)+\lambda_{i,t+1}^{\ast}(s),\\
D^{PT}(b_{it})
&=
-p_c(M_t,Q_t)+\lambda_{i,t+1}^{PT}(s).
\end{align*}
Therefore,
\[
D^{PT}(b_{it})-D^\ast(b_{it})
=
m_{it}(b_{it})\,\partial_M p_c(M_t,Q_t)
+\bigl[\lambda_{i,t+1}^{PT}(s)-\lambda_{i,t+1}^{\ast}(s)\bigr]\ge 0,
\]
since $m_{it}(b_{it})>0$ on the import branch, $\partial_M p_c\ge 0$ (Assumption~\ref{ass:prices2}(i)), and $\lambda_{i,t+1}^{PT}\ge \lambda_{i,t+1}^{\ast}$.

By concavity of the period-$t$ objectives, $D^\ast$ and $D^{PT}$ are decreasing and the interior optimizers satisfy
\[
D^\ast(b_{it}^{\ast})=0,\qquad D^{PT}(b_{it}^{PT})=0.
\]
Hence
\[
D^{PT}(b_{it}^{\ast})\ge D^\ast(b_{it}^{\ast})=0,
\]
implying
\[
b_{it}^{\ast}\le b_{it}^{PT}.
\]

\emph{Part (iv).}
Let
\[
\Delta:=\sum_i \bigl(b_{it}^{SP,t}-b_{it}^{PT}\bigr)>0.
\]
By assumption, this additional period-$t$ charge is fully discharged through future exports. Therefore the cumulative future export difference is strictly positive:
\[
\sum_{\tau=t+1}^T \bigl(Q_\tau^{SP,t}-Q_\tau^{PT}\bigr) > 0.
\]
Hence not all summands can be weakly negative, and there exists some $\tau>t$ such that
\[
Q_\tau^{SP,t}>Q_\tau^{PT}.
\]
\end{proof}

\section{Computational details}\label{app:computational}

This appendix provides full details of the neural-network architectures, training procedure, and experimental configurations used in Section~\ref{sec:results}.

\subsection{Architecture}\label{app:architecture}

Each controllable agent~$i$ is assigned an independent feedforward neural network $f^{\phi^i}$ with the architecture shown in Table~\ref{tab:arch-dec}. The network maps the agent's observation vector to a scalar battery action $a_{it} \in [-1, 1]$, which is scaled to the feasible battery flow $b_{it} = a_{it} \cdot \bar b_i$ and clamped to the capacity-feasible set $B_i(S_{it})$. PV curtailment is disabled (all available PV is used). All weights are initialised with orthogonal initialisation (gain~$0.5$) and biases are set to zero.

\begin{table}[h!]
\centering
\caption{Decentralised policy network architecture (per agent).}
\label{tab:arch-dec}
\begin{tabular}{@{}lll@{}}
\toprule
Layer & Specification & Output size \\
\midrule
Input   & Observation vector & $n_{\mathrm{in}} \in \{5, 9\}$  \\
Hidden~1 & Linear($n_{\mathrm{in}}$, 32) + ReLU & 32 \\
Hidden~2 & Linear(32, 32) + ReLU & 32 \\
Output  & Linear(32, 1) + Tanh & 1 \\
\bottomrule
\end{tabular}
\end{table}

\noindent
The observation vector depends on the information-structure treatment. Under \emph{private} information ($n_{\mathrm{in}}=5$), the agent observes her normalised battery state $S_{it}/\bar S_i$, the time of day $t/T$, normalised demand, normalised PV generation, and the normalised midpoint price $(p_p + p_c)/(2 P^I)$. Under \emph{price-observable} information ($n_{\mathrm{in}}=9$), the agent additionally observes the normalised sell price, buy price, bid--ask spread, average total demand, and average total supply. When endogenous tatonnement is active, the price features are derived from the current tatonnement price estimate $\hat p_t^{(k)}$ rather than from lagged prices.

The social planner uses a single centralised network that observes the full state and outputs joint actions for all agents. The architecture mirrors the decentralised network but with input size $1 + 3N$ (time of day, plus normalised battery state, demand, and PV for each of the $N$ agents) and output size $N$ (one action per agent; passive agents' outputs have no effect). The hidden-layer width is $16N$, so the network capacity scales with the number of agents. Standard PyTorch initialisation is used (no orthogonal).

\subsection{Training procedure}\label{app:training}

At each period~$t$, agents' actions and market prices are determined jointly through an endogenous tatonnement procedure. Starting from an initial price estimate $\hat p_t^{(0)}$ (the previous period's realised price, or the tariff midpoint at $t=1$), the procedure iterates: (i) each agent computes her best-response action $b_{it}^{(k)}$ given $\hat p_t^{(k)}$; (ii) the implied aggregate quantities $(M_t^{(k)}, Q_t^{(k)})$ and new prices $p_t^{(k)}$ are computed; (iii) the price estimate is updated with damping, $\hat p_t^{(k+1)} = \alpha\, p_t^{(k)} + (1-\alpha)\,\hat p_t^{(k)}$. The iteration terminates after at most $N_K$ steps or when the price change falls below the tolerance. In strategic mode, no gradients are detached during the iteration: the entire fixed-point chain carries gradients, so agents' policy updates account for the equilibrium relationship between their actions and the resulting prices. In price-taking mode, the iterations are performed without gradients and only the final forward pass carries gradients through each agent's own net import to her cost.

Table~\ref{tab:hyperparams} lists all training hyperparameters, used consistently across all treatments unless otherwise noted.

\begin{table}[h!]
\centering
\caption{Training hyperparameters.}
\label{tab:hyperparams}
\begin{tabular}{@{}llr@{}}
\toprule
Parameter & Symbol & Value \\
\midrule
Training episodes & & 350 \\
Mini-batches per episode & & 5 \\
Mini-batch size & $N_B$ & 32 \\
Training days & & 1{,}000 \\
Test days & & 100 \\
Optimiser & & Adam \\
Learning rate & $\eta$ & $5 \times 10^{-3}$ \\
Gradient clipping & & max norm $1.0$ \\
\addlinespace
Initial noise scale & $\sigma_0$ & 0.5 \\
Noise type & & Additive Gaussian \\
Noise decay & & Linear: $\sigma_0 (1 - e/300)$ \\
\addlinespace
Tatonnement iterations & $K$ & 6 \\
Tatonnement damping & $\alpha$ & 0.7 \\
Tatonnement tolerance & & $10^{-5}$ \\
Tatonnement mode (strategic) & & Endogenous (full gradient) \\
Tatonnement mode (price-taking) & & Detached (no gradient) \\
\addlinespace
Regret episodes & & 350 \\
Random seeds & & $\{42, 43, 44, 45, 46\}$ \\
\bottomrule
\end{tabular}
\end{table}

For the decentralised networks, exploration noise is additive Gaussian: at each period, the network's output action is perturbed as $a_{it} \leftarrow \mathrm{clamp}(a_{it} + \varepsilon, -1, 1)$ with $\varepsilon \sim \mathcal{N}(0, \sigma_e^2)$ and $\sigma_e = \sigma_0 \cdot \max(1 - e/300,\, 0)$. Noise is applied only on the final tatonnement iteration to avoid contaminating the price-discovery process. For the centralised planner, exploration uses random replacement: each action is replaced with a uniform draw from $[-1, 1]$ with probability $\sigma_e = \sigma_0 \cdot 0.9^e$, decaying exponentially.

In strategic mode, per-agent gradients are computed via \texttt{torch.autograd.grad}$(\bar C^i, \phi^i)$ for each controllable agent~$i$, isolating the gradient $\nabla_{\phi^i} \bar C^i$ from cross-agent terms so that each agent's update reflects her own best response (see Section~\ref{subsec:madh}). In price-taking mode, aggregate quantities are detached from the computation graph before price calculation, making sequential \texttt{backward()} calls safe and equivalent. In both modes, gradients are clipped to unit norm before the optimiser step. During training, initial battery states are drawn uniformly from $[0, \bar S_i]$ in every episode; during evaluation, all batteries are initialised at $S_{i1} = 0$ to match the model specification.

\subsection{Experimental configurations}\label{app:experiments}

Table~\ref{tab:experiments} summarises the three treatments. Each treatment trains all relevant benchmarks over five seeds. Abbreviations: SP = social planner, PT = price-taking, S = strategic (Nash).

\begin{table}[h!]
\footnotesize
\centering
\caption{Summary of experimental treatments.}
\label{tab:experiments}
\begin{tabular}{@{}llllll@{}}
\toprule
Section & Treatment & Community & Mechanism & Info & Benchmarks \\
\midrule
\ref{subsec:baseline}      & Baseline ordering           & Standard (A--F)        & SDR              & Private & SP, PT, S \\
\ref{subsec:competition}   & Competition                 & $N{\times}$C + D + E   & SDR              & Private & PT, S \\
\ref{subsec:heterogeneity} & Heterogeneity \& mechanisms & Standard (A--F)        & SDR, MMR, LIN    & Private & PT, S \\
\bottomrule
\end{tabular}
\end{table}

\noindent
In the competition treatment, the number of symmetric storage agents varies over $N \in \{1, 2, 3\}$. Total battery capacity is fixed at $10\,$kWh and divided equally, so each agent receives $\bar S = 10/N\,$kWh with charge/discharge rate $\bar b = 10/(3N)\,$kW; two passive agents (types~D and~E) provide background demand and supply. In the information-design treatment, the two information structures differ only in the policy network's input vector (5 vs.\ 9 features, as described above); all other hyperparameters are identical. In the scaling treatment, the standard six-agent community is replicated $k \in \{1, 2, 3\}$ times, yielding communities of $6$, $12$, and $18$ agents ($3k$ controllable). The decentralised hidden-layer width remains at $32$ for all agents; the centralised hidden-layer width scales as $16 \times 6k$ (the total number of agents).

\subsection{Stochastic data generation}\label{app:data}

Each agent's PV generation is $s_i^t = \bar s_i \cdot \sin \bigl(\pi(t - 6)/(18 - 6)\bigr)\,\mathbf{1}_{[6,18]}(t)$, perturbed by a mean-reverting AR(1) scaling process $z_t^{\mathrm{PV}} = \mu + \varphi (z_{t-1} - \mu) + \varepsilon_t$ with $\varphi=0.85$, $\sigma_\varepsilon=0.20$, and daily mean $\mu\sim U(0.6, 0.95)$. Additionally, $1$--$4$ Gaussian cloud-dip transients (depth $20$--$70\%$, width $1$--$2\,$h) are superimposed. The scaling factor is clipped to $[0.02, 1.0]$ and drawn per day (shared across agents, with agent-specific $\bar s_i$).

Demand follows a composite time-of-use profile (morning ramp 5--8h, low midday 9--17h, evening peak 17--22h), perturbed by an AR(1) process ($\varphi^d = 0.6$, $\sigma^d = 0.25$) with $1$--$3$ random activity pulses per day. Daily total demand is scaled by a log-normal factor ($\sigma = 0.2$). Average daily demand varies by agent type (Table~\ref{tab:agents}).

\subsection{Convergence}\label{app:convergence}

Figure~\ref{fig:learning-curves} shows training convergence for the baseline community (SDR, private information). Both the strategic and price-taking learners converge smoothly within approximately 100 episodes. The strategic cost lies persistently above the price-taking cost throughout training, reflecting the prisoners' dilemma: even during learning, gradient-based optimisers that internalise the price impact converge to a costlier equilibrium than those that ignore it. The gap between the two curves stabilises early and remains essentially constant after convergence, indicating that the welfare loss from market power is a robust equilibrium property rather than a transient artefact of the training process.

\begin{figure}[h!]
\centering
\includegraphics[width=0.48\textwidth]{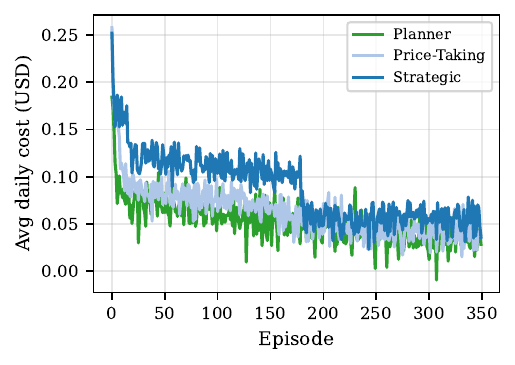}
\caption{Training convergence (SDR, private information). Average daily grid cost per episode for strategic and price-taking learning. The strategic cost converges to a persistently higher level, reflecting the cost of internalising price impact.}
\label{fig:learning-curves}
\end{figure}

\section{Verification of Assumption~\ref{ass:prices2} and concave continuation values}\label{app:verification}

This appendix verifies, for each price mechanism, (a) whether Assumption~\ref{ass:prices2} holds and (b) the concavity condition on equilibrium continuation values required by Proposition~\ref{prop:market-power}(i) for pure-strategy MPE existence. Recall from the proof of Proposition~\ref{prop:market-power}(i) that we require each player's total period-$t$ objective---the sum of the stage payoff and the continuation value---to be concave in her own battery action $b_{it}$. On the export branch ($q>0$), this requires concavity of the export-revenue function $q\,p_p(M,Q_{-i}+q)$ in $q$; on the import branch ($m>0$), this requires convexity of the import-cost function $m\,p_c(M_{-i}+m,Q)$ in $m$; and across both branches, the continuation value $W_{i,t+1}(\theta,\cdot)$ must be concave in $S_{i,t+1}$.

Throughout this appendix, write $M=M_{-i}+m$ for total imports and $Q=Q_{-i}+q$ for total exports when player~$i$'s quantity varies. For all mechanisms, $P^E=0.05$, $P^I=0.30$, and $p_{\mathrm{mid}}=(P^E+P^I)/2=0.175$. We use the general identity
\begin{equation}\label{eq:d2-general}
\frac{d^2}{dm^2}\bigl[m\,p_c(M_{-i}+m,Q)\bigr]
= 2\,\frac{\partial p_c}{\partial M}(M,Q)
+ m\,\frac{\partial^2 p_c}{\partial M^2}(M,Q),
\end{equation}
and the analogous expression for $d^2[q\,p_p]/dq^2$ with $Q=Q_{-i}+q$.

\subsection{Linear mechanism}\label{app:linear-verification}

The linear mechanism admits a fully analytical verification.
In the $M\ge Q$ regime (demand exceeds supply), $p_p = p_{\mathrm{mid}}+\kappa(M-Q)$ and the revenue is
\[
R(q) = q\,\bigl(p_{\mathrm{mid}}+\kappa M - \kappa Q_{-i} - \kappa q\bigr),
\]
so $d^2 R/dq^2 = -2\kappa < 0$ for all parameter values. In the $M<Q$ regime, $p_p = P^E - \kappa M + MD_\kappa/Q$ where $D_\kappa=(P^I-P^E)/2+\kappa M>0$, and applying~\eqref{eq:d2-general} yields
\[
\frac{d^2 R}{dq^2} = -\frac{2M\,D_\kappa\,Q_{-i}}{Q^3} < 0 \qquad\text{whenever }Q_{-i}>0.
\]
At the regime boundary $Q=M$, the slope $dR/dq$ drops discontinuously (from $p_{\mathrm{mid}}-\kappa(Q_{-i}-M)+\kappa M$ to a lower value), so the kink preserves global concavity. When $Q_{-i}=0$, $d^2R/dq^2=0$ in the $M<Q$ regime (linear). Hence $R(q)$ is globally weakly concave, and strictly concave whenever $Q_{-i}>0$ or $M\ge Q$.

In the $M<Q$ regime, $p_c = p_{\mathrm{mid}}-\kappa Q+\kappa(M_{-i}+m)$ is affine in $m$, so $C(m)=m\,p_c$ is quadratic with $d^2 C/dm^2 = 2\kappa > 0$ for all parameter values. In the $M\ge Q$ regime, $p_c = P^I + \kappa Q + AQ/M$ where $A = p_{\mathrm{mid}}-\kappa Q - P^I < 0$, and~\eqref{eq:d2-general} yields
\[
\frac{d^2 C}{dm^2} = \frac{-2AQ\,M_{-i}}{M^3} > 0 \qquad\text{whenever }M_{-i}>0,
\]
since $A<0$. At the regime boundary $M=Q$, the slope $dC/dm$ jumps upward (from $p_{\mathrm{mid}}$ to $p_{\mathrm{mid}}+m(P^I-p_{\mathrm{mid}})/Q > p_{\mathrm{mid}}$), preserving global convexity. When $M_{-i}=0$, $d^2C/dm^2=0$ in the $M\ge Q$ regime. Hence $C(m)$ is globally weakly convex, and strictly convex whenever $M_{-i}>0$ or $M<Q$.

Within each smooth regime, the curvature conditions hold. The stage payoff is concave in $b_{it}$ on both the export and import branches (strictly so whenever $Q_{-i}>0$ or $M_{-i}>0$, respectively). Since the within-day stage payoff is quadratic on each branch, the sum of a (weakly) concave stage payoff and a concave continuation value is concave, so the Kakutani argument in the proof of Proposition~\ref{prop:market-power}(i) applies. However, the analysis above applies only within each smooth pricing regime. The linear mechanism clamps the clearing price $p^{\mathrm{LIN}}$ to the corridor $[P^E, P^I]$. When the clearing price hits the upper bound $P^I$ (at $M-Q = (P^I - p_{\mathrm{mid}})/\kappa$), the consumer price $p_c$ becomes constant at $P^I$ for larger~$M$, and the import-cost slope $dC/dm$ drops from a value above~$P^I$ to exactly~$P^I$. This creates a concave kink that breaks global convexity of the import-cost function. Symmetrically, when the clearing price hits $P^E$, a convex kink in the export-revenue function breaks global concavity.

Within each smooth regime (between the kinks at $M=Q$ and at the clamping boundaries), the curvature conditions hold as derived above. The regime-boundary kink at $M=Q$ preserves convexity/concavity (as shown), but the clamping kinks do not.

\subsection{MMR mechanism}\label{app:mmr-verification}

In the $M\ge Q$ regime, $p_p = p_{\mathrm{mid}}$ is constant, so $R(q)=q\,p_{\mathrm{mid}}$ is linear ($d^2R/dq^2=0$). In the $M<Q$ regime, $p_p = P^E + M(p_{\mathrm{mid}}-P^E)/Q$, and~\eqref{eq:d2-general} yields
\[
\frac{d^2 R}{dq^2} = -\frac{2M\,Q_{-i}\,(p_{\mathrm{mid}}-P^E)}{Q^3} < 0 \qquad\text{whenever }Q_{-i}>0.
\]
At the regime boundary $Q=M$, the slope drops from $p_{\mathrm{mid}}$ (left) to $p_{\mathrm{mid}} - q(p_{\mathrm{mid}}-P^E)/M < p_{\mathrm{mid}}$ (right), preserving global concavity. Hence $R(q)$ is globally weakly concave, strictly concave in the $M<Q$ regime whenever $Q_{-i}>0$.

In the $M<Q$ regime (supply exceeds demand), $p_c=p_{\mathrm{mid}}$ is constant, so $C(m)=m\,p_{\mathrm{mid}}$ is linear ($d^2C/dm^2=0$). In the $M\ge Q$ regime, $p_c = P^I - Q(P^I-p_{\mathrm{mid}})/M$, and~\eqref{eq:d2-general} yields
\[
\frac{d^2 C}{dm^2} = \frac{2Q\,M_{-i}\,(P^I-p_{\mathrm{mid}})}{M^3} > 0 \qquad\text{whenever }M_{-i}>0.
\]
At the regime boundary $M=Q$, the slope jumps from $p_{\mathrm{mid}}$ (left) to $p_{\mathrm{mid}} + m(P^I-p_{\mathrm{mid}})/Q > p_{\mathrm{mid}}$ (right), preserving global convexity. Hence $C(m)$ is globally weakly convex, strictly convex in the $M\ge Q$ regime whenever $M_{-i}>0$.

\paragraph{Conclusion for the MMR mechanism.}
Assumption~\ref{ass:prices2} holds globally with weak curvature. In the regime where a player has price impact (the ``long side''), the curvature is strict. In the constant-price regime (the ``short side''), the cost or revenue is linear; however, the continuation value $W_{i,t+1}$ provides the concavity needed for a unique best response in the dynamic game.

\subsection{SDR mechanism}\label{app:sdr-verification}

In the $Q\le M$ regime ($\rho\le 1$), $p_p = P^E P^I M / D$ where $D=(P^I-P^E)Q + P^E M$. A direct computation yields
\[
\frac{d^2 R}{dq^2} = -\frac{2\,K_1\,(P^I-P^E)}{D^3},
\qquad K_1 = P^E P^I M\bigl[(P^I-P^E)Q_{-i}+P^E M\bigr] > 0.
\]
Since $P^I>P^E$ and $D>0$, we have $d^2R/dq^2<0$ for all attainable $(M,Q)$ with $Q\le M$. In the $Q>M$ regime ($\rho>1$), $p_p=P^E$ and $R(q)=q\,P^E$ is linear. In the $Q>M$ regime ($\rho>1$), $p_p=P^E$ and $R(q)=q\,P^E$ is linear with slope~$P^E$. At the boundary $Q=M$, the left-hand slope (from the $\rho\le 1$ regime) equals
\[
P^E + (M-Q_{-i})\,\frac{P^E(P^E-P^I)}{P^I M} \;<\; P^E,
\]
because $\partial_Q p_p < 0$ pulls the marginal revenue below the price. The slope therefore \emph{increases} at the boundary, creating a convex kink. Consequently, the export-revenue function is \emph{not} globally concave: it is strictly concave within the $\rho\le 1$ regime and linear in the $\rho>1$ regime, but the kink at $Q=M$ violates global concavity.

In the $Q>M$ regime ($\rho>1$), $p_c = P^E$ and $C(m)=m\,P^E$ is linear. In the $Q\le M$ regime, $p_c$ is the sum of two terms with competing curvature:
\[
p_c(M,Q) = \underbrace{\frac{P^E P^I Q}{D}}_{\text{convex in }M} + \underbrace{P^I \left(1-\frac{Q}{M}\right)}_{\text{concave in }M},
\]
where $D=(P^I-P^E)Q+P^E M$. Applying~\eqref{eq:d2-general},
\[
\frac{d^2 C}{dm^2}
= 2 \left[\frac{P^I Q}{M^2}-\frac{(P^E)^2 P^I Q}{D^2}\right]
+ m \left[\frac{2(P^E)^3 P^I Q}{D^3}-\frac{2 P^I Q}{M^3}\right].
\]
For moderate $m$ (i.e.\ $\rho$ bounded away from zero), the first bracketed term dominates and $d^2C/dm^2>0$. However, as $m\to\infty$ with $Q$ and $M_{-i}$ fixed, $D\approx P^E M$ and both terms in the second bracket converge to $2P^I Q/M^3$ with opposite signs, while the first bracket converges to zero. Numerically, $d^2C/dm^2$ changes sign from positive to negative at a threshold that depends on the ratio $m/Q$. For example, with $M_{-i}=1$ and $Q=3$, the sign change occurs at $m\approx 8.9$ (i.e.\ $\rho\approx 0.30$); with $M_{-i}=0.5$ and $Q=0.5$, it occurs at $m\approx 2.6$ ($\rho\approx 0.16$).

As $\rho\to 0$, $p_c\to P^I$ and the cost function $C(m) = m\,p_c$ approaches $m\,P^I$ (linear). The hyperbolic structure of the SDR formula causes $C(m)$ to approach this linear asymptote from above, creating a region of slight concavity. This means Assumption~\ref{ass:prices2}(i) \emph{fails} for the SDR mechanism at extreme quantity ratios, even in its weak (non-strict) form.

On the import branch at period $T$ (where $W_{i,T+1}\equiv 0$), the stage payoff $-C(m)$ is not globally concave in $m$, so the standard one-shot Kakutani argument does not apply directly. Two observations mitigate this. First, Proposition~\ref{prop:private-existence} guarantees existence of a perfect conditional $\varepsilon$-equilibrium for every $\varepsilon>0$ via the \cite{myerson2020epsilon} argument, which requires no curvature assumptions. Second, for the dynamic game ($t<T$), the total objective includes the continuation value, which is concave in $S_{i,t+1}$ and hence in $b_{it}$; this additional concavity can dominate the slight non-convexity of the stage cost in the region where the violation occurs. We verify this numerically below.

\subsection{Verification of concave equilibrium continuation values}\label{app:verification}

We additionally numerically verifies, for each price schedule mechanism considered in the paper, the concavity-propagation condition required by Proposition~\ref{prop:market-power} namely that the equilibrium value function $W_{it}^\ast(\theta,S_t)$ inherits concavity in $S_{it}$ from a concave continuation $W_{i,t+1}^\ast(\theta,S_{t+1})$. We first explain the approach and then report the results of the individual calculations for each mechanism. Note that for the linear pricing mechanism, the result can be derived analytically and the verification focuses on SDR and MMR.

\subsection{Numerical verification approach}
For each mechanism, the numerical verification proceeds by backward induction on a finite grid for the public battery state. Fix a type profile $\theta$ and hence the induced period surpluses $\ell_i^t(\theta_i)$. In each period $t$, we discretize $(S_{it},S_{-i,t})$ on a uniform grid over the feasible state space and, at each grid point, solve the one-period continuation game in battery flows $(b_{it},b_{-i,t})$ given the continuation value from period $t+1$. This yields equilibrium actions and the induced equilibrium value $W_{it}^\ast(\theta,S_t)$ on the state grid. We then test concavity in the own battery state $S_{it}$ by computing discrete second differences of the grid values while holding $S_{-i,t}$ fixed (slice-by-slice), and repeat this recursively from $t=T$ back to $t=1$. Thus, the verification checks directly whether concavity is preserved by the equilibrium Bellman operator for the mechanism under consideration.

To make the verification feasible on finer grids, we use analytical derivatives of the pricing mechanisms to solve the repeated best-response optimization at each state grid point. In particular, for each mechanism we calculate $\partial_M p_c(M,Q)$ and $\partial_Q p_p(M,Q)$ (and the corresponding second derivatives on smooth branches), which yield analytical derivatives of the stage payoff in the importer and exporter cases. Combined with the derivative of the continuation term $W_{i,t+1}^\ast(\theta,S_{t+1})$ with respect to $b_{it}$, this converts the best-response problem on each smooth segment into a root-finding problem for the first-order condition. The solver partitions the feasible interval at all relevant kinks---(i) the trade-regime kink at $b_{it}=\ell_i^t(\theta_i)$, (ii) mechanism-specific pricing kinks such as the SDR/MMR branch boundaries, and (iii) continuation interpolation kinks---and applies a root solver on each smooth subinterval. If needed, we additionally fall back to a bounded scalar minimization routine. The use of the derivatives of the price mechanisms reduces computation time substantially, but note that the concavity check itself is performed on the computed grid values.

We report multiple verifications for each mechanism. These vary in particular by the horizon length $T$, the state-grid size $n_{\text{grid}}$, and the surplus path $(\ell_i^t,\ell_{-i}^t)_{t=1}^T$.\footnote{We keep solver parameters regarding the best-response tolerance, maximum iterations, and damping constant across runs.} We simplify the surplus path variation by distinguishing three types: whether players' realizations are symmetric, asymmetric, or a mix across periods. For each run, we report as ``Result'' whether the concavity test passes for the entire run (all periods, both players, and all own-state slices) at the specified numerical tolerance. ``Worst $\Delta^2$'' reports the largest discrete second difference observed in the concavity checks across all reported slices and periods; positive values indicate local convexity on the grid, while very small positive values (near the tolerance) are typically numerical error rather than economically meaningful violations. 
Finally, ``Time (s)'' is the wall-clock runtime of the scenario.

\begin{table}[htbp]
\centering
\small
\caption{Initial numerical concavity-verification runs for SDR and MMR}
\label{tab:verification-runs}
\begin{tabular}{llrrlrrr}
\toprule
Mechanism & Type & $T$ & $n_{\text{grid}}$ & Interp. & Result & Worst $\Delta^2$ & Time (s) \\
\midrule
SDR & Sym  & 2 & 25 & linear & PASS    & 1.03e-09 & 27.3   \\
SDR & Sym  & 2 & 50 & linear & PASS    & 2.12e-09 & 212.8  \\
SDR & Asym & 2 & 25 & linear & PASS    & 2.09e-09 & 552.7  \\
SDR & Mix  & 2 & 25 & linear & PASS    & 1.95e-09 & 522.0  \\
\addlinespace
SDR & Sym  & 4  & 25  & linear & PENDING & -- & -- \\
SDR & Asym & 4  & 25  & linear & PENDING & -- & -- \\
SDR & Mix  & 4  & 25  & linear & PENDING & -- & -- \\
\addlinespace
MMR & Sym  & 2  & 25  & linear & PENDING & -- & -- \\
MMR & Sym  & 2  & 50  & linear & PENDING & -- & -- \\
MMR & Asym & 2  & 25  & linear & PENDING & -- & -- \\
MMR & Mix  & 2  & 25  & linear & PENDING & -- & -- \\
\bottomrule
\end{tabular}
\vspace{0.35em}
\parbox{0.96\linewidth}{\footnotesize
\emph{Notes:} ``Type'' denotes the surplus-path structure (Sym = symmetric, Asym = asymmetric, Mix = mixed across periods). $T$ is the horizon length and $n_{\text{grid}}$ is the number of grid points per state dimension. ``Result'' reports whether the discrete concavity test passes for the full run. ``Worst $\Delta^2$'' is the largest discrete second difference observed across all slices and periods; positive values indicate local convexity on the grid. ``PENDING'' indicates runs deferred to the post-submission version. For MMR, stage-payoff curvature is established analytically in Appendix~\ref{app:mmr-verification}; the pending entries concern the continuation-value verification only.
}
\end{table}

\end{document}